\newcommand{\piPoLeCe}{\widehat{\pi}_{\mathrm{PoLeCe}}}
\newcommand{\E}{\mathrm{E}}
\DeclareMathOperator*{\argmax}{argmax}
\newtheorem{lemma}{Lemma}
\newtheorem{prop}{Proposition}
\theoremstyle{remark}
\newtheorem{remark}{Remark}
\begin{document}
\title[Policy Learning with Confidence]{Policy Learning with Confidence{$^\ast$}}

\author[Victor Chernozhukov]{Victor Chernozhukov{$^\dagger$}}
\thanks{{$^\ast$}This is a revised version of the arXiv working paper 2502.10653 originally posted in February 2025. We thank Isaiah Andrews, Jiafeng (Kevin) Chen, Nathan Hendren, Toru Kitagawa, Matt Masten, Adam McCloskey, Whitney Newey, Vira Semenova, Jesse Shapiro, Aleksey Tetenov and conference and seminar participants at the Munich Econometrics Workshop, Duke University,  CUNY, University of Geneva, Tilburg University, the Encounters in Econometric Theory workshop, CUHK Shenzhen, IAAE, 
Goethe University Frankfurt, the 2025 ESWC, and the 2026 ASSA meeting for helpful comments and discussion.  Tian Xie provided excellent research assistance. Liyang Sun acknowledges generous support from the Economic and Social Research Council (new investigator grant UKRI607).%
\\ {$^\dagger$}Department of Economics, MIT}

\author[Sokbae Lee]{Sokbae Lee{$^\ddagger$}}
\thanks{{$^\ddagger$}Department of Economics, Columbia University. }

\author[Adam M. Rosen]{Adam M. Rosen{$^\S$}}
\thanks{{$^\S$}Department of Economics, Duke University. }

\author[Liyang Sun]{Liyang Sun{$^\P$}}
\thanks{{$^\P$}Department of Economics, UCL and CEMFI. Email: liyang.sun@ucl.ac.uk}
\date{January 2026}  

\maketitle

\begin{abstract}
This paper introduces a rule for policy selection in the presence of estimation uncertainty, explicitly accounting for estimation risk.  The rule belongs to the class of risk-aware rules on the efficient decision frontier, characterized as policies offering maximal estimated welfare for a given level of estimation risk. Among this class, the proposed rule is chosen to provide a reporting guarantee, ensuring that the welfare delivered exceeds a threshold with a pre-specified confidence level. We apply this approach to the allocation of a limited budget among social programs using estimates of their marginal value of public funds and associated standard errors.  
\end{abstract}
\maketitle
\noindent\texttt{Keywords:}  budget allocation, risk-aware policy learning, statistical decision theory
\smallskip

\noindent\texttt{JEL classification codes:} C14, C44, C52.

\newpage

\section{Introduction}

Consider a decision maker (DM) faced with choosing from a menu of policies to maximize expected welfare. The DM could be a planner deciding how to allocate treatments to heterogeneous individuals, a firm choosing different potential innovations in which to invest resources, an auctioneer choosing an optimal auction design or reserve price, or a policy-maker choosing expenditure shares to allocate to government programs to maximize a measure of public welfare. If the DM knew the welfare that would be obtained from implementing each policy, the choice would be easy; she would then simply choose the policy that would yield the highest welfare. However, the DM lacks such precise knowledge, but instead has data from one or possibly multiple studies that can be used to consistently estimate the welfare that would be achieved by each policy. The estimates are measured with varying degrees of precision, as reflected by their standard errors. How should the DM use the available information for policy selection?

An intuitively appealing choice is to select the policy with the highest estimated welfare, the so-called plug-in rule, which simply replaces the population objective with its sample analog in determining policy choice. In the literature on policy learning, this is referred to as empirical welfare maximization (EWM).\footnote{See for example \cite{KT2018} and \cite{AW2021}. \cite{andrews2024inference} alternatively refers to this as the ``natural rule'' or ``picking the winners''.} The same logical task applies more broadly to a variety of economic contexts, for example when researchers use structural models, they obtain estimates of the performance of different policies that can then be used to inform policy choice. \cite{Manski2021} calls this ``as-if'' optimization, described as, ``specification of a model, point estimation of its parameters, and use of the point estimate to make a decision that would be optimal if the estimate were accurate.'' Here we refer to such rules interchangeably as plug-in rules or empirical welfare maximization (EWM) rules.

Unfortunately, the available estimates for the welfare of different choices are generally not identical to their population values. Estimation error could result in the EWM rule delivering suboptimal policy choice. It may be observed, for example, that among a menu of options policy A has the highest estimated welfare with policy B coming in a close second, but that the standard error of policy B is much smaller than that of policy A, reflecting that the welfare of policy B is estimated much more precisely. Could it be that policy B is actually better than policy A, and that the higher estimated welfare of policy A is simply down to a lack of precision in the estimates? Should the DM consider choosing policy B since it is estimated more precisely to hedge against estimation error?

This paper proposes a rule for selecting policies with the goal of maximizing expected welfare in the presence of estimation uncertainty, explicitly accounting for estimation risk in the decision rule. As inputs to her decision the DM has consistent estimates for the welfare of each available policy, and their corresponding sample variance matrix, with sample estimates approximately normally distributed around their population means.\footnote{If the estimates are from independent samples the sample variance matrix will simply be a diagonal matrix with each policy's sample variance along the diagonal.%
} 
  We define and analyze a class of risk-aware policy rules that provide a principled manner for explicitly balancing the estimated welfare of each policy against its estimation risk, and which we show have favorable regret properties.  %
  Risk-aware rules select a policy from the (Pareto) efficient decision frontier that balances performance and precision as measured by trading off higher sample estimates against smaller sample variance, where the exact tradeoff is governed by the DM's choice of policy rule from the class of risk-aware rules. Our newly proposed rule is determined from within this class by balancing this tradeoff using the tangent line to the frontier whose slope is determined by the critical value of a one-sided upper confidence band for welfare following the approach for intersection bound inference developed in \cite{CLR2013}. Consequently,  this rule delivers a reporting guarantee, ensuring with high confidence that the actual welfare delivered exceeds a lower threshold, while also having the favorable regret properties of all risk-aware rules. The proposed rule thus ensures \emph{Policy Learning with Confidence}, and is subsequently referred to as the PoLeCe rule.

To illustrate, consider a setting in which a budget-conscious DM needs to allocate funds across several social programs that affect the same group of individuals, either in response to a slight budget increase or for incremental cost-cutting.
Suppose the DM wishes to use the Marginal Value of Public Funds (MVPF), spearheaded by \cite{Hendren/Sprung-Keyser:2020}, for this purpose. The MVPF of each social program is the ratio of its marginal benefit to its net marginal cost to the government, inclusive of the impact of any behavioral responses on the government budget. For instance, an MVPF of 1.5 indicates that each \$1 of net government spending generates \$1.50 in benefits for beneficiaries. For a utilitarian DM, the MVPF provides a  metric to compare the ``bang for the buck'' of different policies that target the same beneficiaries.  A DM interested in allocating additional marginal expenditure will prioritize programs with the highest MVPF to maximize welfare.\footnote{Here we abstract from distributional incidence and assume the DM values dollars equally across the beneficiaries of each policy.  If welfare weights vary across beneficiaries, as in the original framework of \cite{Hendren/Sprung-Keyser:2020}, our analysis can accommodate this by adjusting each MVPF in the value function using the DM's designated weights. Example welfare weights have been derived in \cite{hendren2020measuring}. Other metrics for measuring programs' marginal benefits and costs have also been discussed in \cite{garcia2022three}.} 

In practice the DM must make decisions based on estimated MVPFs. The Policy Impacts Library, \cite{policy_impacts_library}, provides easily accessible MVPF estimates online while also noting significant estimation uncertainty in some cases.  Using the EWM rule to allocate funds would direct the entire budget to the program with the highest MVPF estimate. However, the DM may  be reluctant to direct the entire budget to a single social program if this MVPF estimate has a large standard error relative to other programs whose MVPF estimates are nearly as high, and which have smaller standard errors.  How should the DM balance the tradeoff between the size of the estimated MVPF and the precision with which it is estimated?
\begin{center}
    [FIGURE~\ref{fig:frontier_MVPF} AROUND HERE]
\end{center}

This tradeoff is illustrated in Figure~\ref{fig:frontier_MVPF} using the MVPF estimates from the Policy Impacts Library, which comprises 14 US programs after we focus on estimates based on randomized control trials (RCTs) with finite standard errors. To mimic a DM who applies constant welfare weights within certain age groups but not necessarily across them due to distributional concerns, we use the Policy Impacts Library classification of programs based on intended beneficiaries' age. We then consider budget allocations separately for two distinct age groups. Six of these programs target beneficiaries age 25 and under, covering early childhood education programs, college financial aid, and job training.\footnote{As explained in the Policy Impacts Library, alternative specifications have led to larger MVPF estimates for early childhood education depending on how lifetime earnings are forecasted and whether one accounts for the transfer value of preschool subsidies to  parents.} Eight of these programs target beneficiaries age 25 and above, covering federal social assistance, health insurance, and housing vouchers. In both cases, the MVPF of the programs selected by EWM are not estimated as precisely as the welfare delivered by diversifying the budget across several programs whose MVPFs are estimated with significantly higher precision.  If a DM values both welfare  and precision, how should these goals be balanced?

The PoLeCe rule is determined by the intersection of the green line in each panel of Figure~\ref{fig:frontier_MVPF} with the decision frontier for RW-rules. The rule maximizes estimated welfare offset by a data-dependent penalization factor such that the resulting objective value automatically provides the lower bound of a one-sided confidence interval for the welfare achieved by the selected rule. Thus, the rule delivers a reporting guarantee, ensuring with high confidence that the actual welfare exceeds a lower threshold; no adjustment for post-selection inference is required. Importantly, policies that allocate fractional shares to different treatments or programs can be allowed, producing a richer frontier than that available from singleton allocations such as the individual programs indicated by hollow squares in Figure \ref{fig:frontier_MVPF}.

In the above example, all that was required to map out the efficient decision frontier, and indeed all that is needed to determine the optimal allocation following our proposal, are point estimates for the welfare of each available policy and their joint variance-covariance.\footnote{Our proposed PoLeCe rule uses the sample correlation of the estimates to calibrate the precision penalty to achieve the aforementioned reporting guarantee. The general class of risk-aware decision rules described in Section \ref{sec:motivation and risk-aware policy choice} allows for different choices for the penalization factor, which corresponds to the green slope highlighted in Figure \ref{fig:frontier_MVPF}. Other risk-aware rules may thus only require standard errors rather than estimates of the entire correlation structure, depending on their choice of penalization factor.} This can be obtained in applications that feature a variety of different models and sampling processes. An important special case to which our framework applies is the analysis of optimal treatment assignment using data on individual-specific treatments and allocations, such as from an RCT. This is demonstrated in online Appendix \ref{sec:example treatment policies} with three additional applications spanning the study of treatments that encourage immunization, enable informal savings technologies, and encourage sobriety among low-wage workers. In a real-world setting of treatment assignment, while the nonprofit GiveWell uses EWM based on RCTs, they have recently highlighted the importance of accounting for estimation uncertainty in decision-making for the sake of ``transparency" \citep{salisbury_how_2024}. Our proposal offers a practical solution to this concern, while also applying to settings in which individual-specific treatment responses may not be directly observable, and when estimates may be obtained from different sources. 

Prescriptions for decision making with sample data include conditional Bayes rules, maximin rules, and minimax regret rules. These are summarized in \cite{Manski2021}, which advocates the use of statistical decision theory from the frequentist perspective of \cite{Wald:50}. Minimax regret rules in particular have recently received renewed attention, starting with the pioneering work of \cite{Manski2004}.  Follow-up work includes a decision theoretic framework introduced by \cite{dehejia_program_2005}, finite-sample bounds considered by \cite{stoye_minimax_2009}, and an asymptotic framework introduced by \citet{hirano_asymptotics_2009,Hirano/Porter:Handbook}. While the use of statistical decision theory for selecting policy rules summarized in \citet{Manski2021,Manski:2023} is conceptually appealing, computational challenges remain, particularly in settings involving samples from multiple data sources as in the preceding example.  Recent papers on  policy learning provide approaches for treatment rule estimates with favorable asymptotic regret properties. \citet{KT2018} prove the optimality of EWM rules in the sense that as the sample size increases expected regret converges to zero at the minimax rate. \cite{mbakop_model_2021} proposes a penalized welfare maximization rule that penalizes the complexity of each policy, and establishes an oracle property for model selection. \citet{AW2021} and \citet{DSLC2019} propose doubly-robust estimation of average welfare, which leads to an optimal rule even with quasi-experimental data. The rule proposed here provides a practical approach to providing a decision rule with favorable regret properties while also yielding   a confidence interval for welfare. As such, the PoLeCe rule fits within the class of P-certified decision rules studied by \cite{andrews2025certifieddecisions}, a point we come back to in Section \ref{sec: related lit p certified and winners} where we discuss the comparison to other recently proposed rules and inference approaches.

The rest of the paper proceeds as follows. Section \ref{sec:motivation and risk-aware policy choice} sets out the framework and defines the class of risk-aware (RW) decision rules. Section \ref{Sec: Reg Guarantees of RW policies} establishes high probability regret bounds for RW-rules in the spirit of regret properties analyzed for EWM rules by e.g. \cite{KT2018} and \cite{AW2021}. Section \ref{Sec: PoLeCe and properties} sets forth the PoLeCe rule among the class of RW decision rules and formally provides its dual regret and coverage guarantees. Section \ref{sec: related lit p certified and winners} discusses the general class of P-certified decisions and other rules and inference approaches proposed in the recent literature. Section \ref{sec: empirical examples} returns to the context of a planner deciding on the allocation of additional public funds across different social programs and presents the full application of our approach to this empirical setting. Section \ref{sec: conclusion} concludes. Proofs of propositions and details of the Gaussian bootstrap used to compute the PoLeCe penalization factor are provided in the appendices. Additional results including alternative motivations for risk-aware decisions, details for implementation and construction of the efficient decision frontier, three empirical applications to treatment choice using data from RCTs, and computational experiments calibrated to those applications are included in the online supplementary appendices. %

\section{Risk-Aware Policy Choices}
\subsection{Motivation and Risk-Aware Policy Choice}\label{sec:motivation and risk-aware policy choice}

We begin with the following key problem as motivation. A public or private agency is considering programs \(\{1,...,J\}\) and must decide on investment shares
\[
\pi = (\pi_1,..., \pi_J),
\]
subject to 
\[
\pi \in \Pi := 
\Bigl\{\,(\pi_1,\dots,\pi_J): \ \sum_{j=1}^J \pi_j =1,\ 0 \leq a_j \leq \pi_j \leq b_j \leq 1\Bigr\},
\]
which is the unit simplex intersected with a rectangular set of constraints. These additional constraints may reflect diversity or other requirements. The welfare of allocation \(\pi\) is 
\[
V(\pi) =  \pi' R,
\]
where \(R\) is a vector of the rate of return measures, for instance the ratios of marginal benefit to net cost of funds used in public finance. The agency is given estimates \(\widehat R\) that are approximately Gaussian\footnote{In the empirical application we consider, \(\Omega\) is block-diagonal, and \(n\) is the notional sample size used to study the behavior of rules as more information is acquired.}
\[
\widehat R  \;\overset{a}{\sim}\; N(R, \Omega/n),
\] 
and the estimated welfare of the allocations are also approximately Gaussian by the continuous mapping theorem:
\begin{equation}\label{eq:budget value}
\widehat V(\pi)= \pi'\widehat R \;\overset{a}{\sim}\; \pi' R + s(\pi) Z_{\pi}, 
\quad Z_{\pi} \sim N(0,1), 
\quad s^2(\pi) = {\pi' \Omega \,\pi}/{n},
\end{equation}
where \(\{Z_\pi\}_{\pi \in \Pi}\) is a Gaussian vector with standard normal marginals. Here \(\overset{a}{\sim}\) means ``approximately distributed as'' formalized in condition~\eqref{eq:gaussian approx} below.

Hence, for each fund allocation \(\pi\), the agency is given \(\widehat V(\pi)\) along with its associated estimation risk \(s(\pi)\). What should the agency do? Our proposal is  to use the risk-aware rule that maximizes empirical welfare offset by the estimation risk times a critical value:
\begin{equation}\label{eq:polece}
\widehat \pi_{\text{RW}} \;\in\; \argmax \Bigl\{ 
 \widehat{V}(\pi) - k \,\widehat{s}(\pi) : \pi \in \Pi \Bigr\},
\end{equation}
where \(k\) is a critical value and \(\widehat{s}(\pi)\) is a consistent estimator of the estimation risk \(s(\pi)\). By varying \(k\), we trace out the \emph{efficient decision frontier} of treatment policies, as illustrated by Figure~1 in the introduction. Each point on the frontier corresponds to a particular \(k>0\). Appendix \ref{sec:frontier algorithm} in the online supplement provides an efficient algorithm to compute the frontier for any application. %

Our leading proposal to choose \(k\) is to meet certain reporting guarantees with high confidence, thereby conducting ``policy learning with confidence'' (PoLeCe). %
The resulting choice generally differs from the empirical welfare maximizer (EWM):
\begin{equation}
\label{eq:EWM}
\widehat \pi_{\text{EWM}} \;\in\; \argmax \{ \widehat V(\pi): \pi \in \Pi\}.
\end{equation}

In the budget allocation problem, the EWM rule would simply allocate all funds to the program with the highest estimated return. This is generally unappealing on intuitive grounds. In contrast, the risk-aware approach (\ref{eq:polece}) would spread out funds over an ``efficient'' portfolio of programs, similar to Markowitz portfolio allocation in finance. However, the motivation and formulation of our approach are distinct from the Markowitz model.

\subsection{Regret Properties and Guarantees of Risk-Aware Policies}\label{Sec: Reg Guarantees of RW policies}

A focal aim in the related literature on treatment choice  has been providing decision rules with favorable regret properties. %
Adopting regret as a benchmark, we now provide regret bounds for any risk-aware decision rule \(\widehat \pi_{\text{RW}}(\widehat k)\) that solves the risk-adjusted empirical welfare problem:
\[
\widehat \pi_{\text{RW}}(\widehat k) \;\in\; \arg \max_{\pi \in \Pi}\bigl\{\widehat V(\pi) \;-\; \widehat k \,\widehat s(\pi)\bigr\},
\]
where \(\widehat k \ge 0\) may depend on both the decision maker's preferences and the data. Note that \(\widehat k = 0\) corresponds to the EWM rule.

Define
\[
\widehat{Z}_\pi \;:=\; \frac{\widehat{V}(\pi) - V(\pi)}{\widehat{s}(\pi)}
\]
to be the normalized estimation error process. We use the following Gaussian approximation condition, denoted by (G). Suppose the policy class $\Pi$ can be well-approximated by a $p$-dimensional discretization. Let \(\mathcal{A}\) represent the collection of rectangular sets in \(\mathbb{R}^p\).

\begin{itemize}
\item[(G)] There exists a sequence of non-negative constants \(r_n\) with \(r_n \searrow 0\) such that, 
\begin{equation}
\Bigl\lvert 
\Pr\!\bigl((\widehat{Z}_\pi)_{\pi \in \Pi} \in A\bigr) 
\;-\;
\Pr\!\bigl((Z_\pi)_{\pi \in \Pi} \in A\bigr)
\Bigr\rvert
\;\le\; r_n, \quad 
\text{ for all } A \in \mathcal{A}.\label{eq:gaussian approx}\tag{G}
\end{equation}
\end{itemize}

These conditions are known to hold under mild assumptions.\footnote{See \cite{CCKK-AOS,CCK-AOP} for sharp forms of Gaussian approximations, and \cite{CCKK-Review} for a review. Additional references, e.g. \cite{belloni2018uniformly,quintas2022finite} discuss Gaussian approximations for many or a continuum of target parameters, including those learned via debiased machine learning.} Furthermore, ~\eqref{eq:gaussian approx} implicitly requires consistency of risk estimates: \(\max_{\pi \in \Pi}\lvert \widehat s(\pi)/s(\pi) - 1\rvert \to_P 0\). %

\medskip
\noindent
\emph{Quantiles of the Maximal Estimation Error.}
For a subset \(K \subseteq \Pi\), let
\[
q_{1-\beta, K}
\;:=\;
(1-\beta) \text{-Quantile} \Bigl(\sup_{\pi \in K} Z_{\pi}\Bigr)
\]
denote the \((1-\beta)\)-quantile of the estimation error over \(K\) under the Gaussian approximation. Under~\eqref{eq:gaussian approx}, with probability at least \(1 - \beta - r_n\), 
\(\max_{\pi \in K}\widehat{Z}_\pi \le q_{1-\beta,K}\). 
Let \(V_{\max} = \sup_{\pi\in\Pi}V(\pi)\) be the maximal true welfare and 
\(\Pi_0 = \{\pi \in \Pi : V(\pi)= V_{\max}\}\) be the set of policies with maximal true welfare, with cardinality \(p_0 := |\Pi_0|\). Define
\[
\bar{\sigma}_{\Pi} \;:=\; \sqrt{n}\,\max_{\pi \in \Pi} \widehat s(\pi),
\quad
\underline{\sigma}_{\Pi_0} \;:=\; \sqrt{n}\,\min_{\pi \in \Pi_0} \widehat s(\pi).
\]
These represent, respectively, an upper bound on the estimation risk of all policies in \(\Pi\) and a lower bound on the estimation risk of the best policies \(\Pi_0\).

\begin{prop}[Regret Bounds for Risk-Aware Decisions]\label{thm:RWregret}
Under condition~\eqref{eq:gaussian approx}, the regret of any RW policy rule  \(\widehat\pi_{\text{RW}}(\widehat{k})\) is bounded with probability at least \(1 - 2\beta - 2r_n\) as:
\begin{equation}\label{Th 1 regret bound}
V_{\max} - V(\widehat\pi_{\text{RW}}(\widehat{k}))
\;\;\le\;\;
\frac{\underline{\sigma}_{\Pi_0}}{\sqrt{n}}
\Bigl\{ q_{1-\beta,\Pi_0} + \widehat{k}\Bigr\}
\;+\;
\frac{\bar{\sigma}_{\Pi}}{\sqrt{n}}
\Bigl\{ \bigl(q_{1-\beta,\Pi} - \widehat{k}\bigr)_{+}\Bigr\}.
\end{equation}
In particular, if \(0 \le \widehat{k} \lesssim q_{1-\beta,\Pi}\), then with the same probability,
\[
V_{\max} - V(\widehat\pi_{\text{RW}}(\widehat{k}))
\;\;\lesssim\;\;
\bigl(\bar{\sigma}_{\Pi}/\sqrt{n}\bigr)\,q_{1-\beta,\Pi}.
\]
Moreover, if \(q_{1-\beta,\Pi} \le \widehat{k} \lesssim q_{1-\beta,\Pi}\), then with the same probability,
\[
V_{\max} - V(\widehat\pi_{\text{RW}}(\widehat{k}))
\;\;\lesssim\;\;
\bigl(\underline{\sigma}_{\Pi_0}/\sqrt{n}\bigr)\,q_{1-\beta,\Pi}.
\]
\end{prop}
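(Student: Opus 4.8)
The plan is to start from the defining optimality inequality of the risk-aware rule and compare it against a well-chosen feasible policy. Writing $\widehat\pi := \widehat\pi_{\text{RW}}(\widehat k)$ and fixing a comparison policy $\pi^* \in \argmin_{\pi \in \Pi_0}\widehat s(\pi)$, so that $\widehat s(\pi^*) = \underline{\sigma}_{\Pi_0}/\sqrt n$ and $V(\pi^*) = V_{\max}$, optimality gives $\widehat V(\widehat\pi) - \widehat k\,\widehat s(\widehat\pi) \ge \widehat V(\pi^*) - \widehat k\,\widehat s(\pi^*)$. I would then substitute the identity $\widehat V(\pi) = V(\pi) + \widehat s(\pi)\,\widehat Z_\pi$ (a rearrangement of the definition of $\widehat Z_\pi$) on both sides and isolate the regret, obtaining
\[
V_{\max} - V(\widehat\pi) \;\le\; \widehat s(\widehat\pi)\bigl(\widehat Z_{\widehat\pi} - \widehat k\bigr) \;+\; \widehat s(\pi^*)\bigl(\widehat k - \widehat Z_{\pi^*}\bigr).
\]
The remainder of the argument bounds these two terms on high-probability events and combines them by a union bound.

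For the first term I would use the elementary inequality $x \le x_+$ together with one-sided upper control on the estimation error. On the rectangular event $E_1 = \{\max_{\pi\in\Pi}\widehat Z_\pi \le q_{1-\beta,\Pi}\}$, which has probability at least $1-\beta - r_n$ by~\eqref{eq:gaussian approx}, we have $\widehat Z_{\widehat\pi} - \widehat k \le q_{1-\beta,\Pi} - \widehat k \le (q_{1-\beta,\Pi} - \widehat k)_+$. Since this final bound is nonnegative and $\widehat s(\widehat\pi) \ge 0$, I can multiply through and use $\widehat s(\widehat\pi) \le \bar{\sigma}_{\Pi}/\sqrt n$ to obtain $\widehat s(\widehat\pi)(\widehat Z_{\widehat\pi} - \widehat k) \le (\bar{\sigma}_{\Pi}/\sqrt n)(q_{1-\beta,\Pi}-\widehat k)_+$, which is exactly the second summand in the claimed bound.

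The second term is where the main subtlety lies, since it calls for a lower-tail bound on $\widehat Z_{\pi^*}$ while only an upper quantile is available. The plan is to exploit the symmetry of the centered Gaussian process: because $(Z_\pi) \stackrel{d}{=} (-Z_\pi)$, the $(1-\beta)$-quantile of $\sup_{\pi\in\Pi_0}(-Z_\pi) = -\inf_{\pi\in\Pi_0}Z_\pi$ is again $q_{1-\beta,\Pi_0}$. Applying~\eqref{eq:gaussian approx} to the rectangular event $E_2 = \{\widehat Z_\pi \ge -q_{1-\beta,\Pi_0}\ \text{for all}\ \pi\in\Pi_0\}$ gives $\Pr(E_2)\ge 1-\beta-r_n$, and on $E_2$ we have $-\widehat Z_{\pi^*}\le q_{1-\beta,\Pi_0}$ uniformly in the data-dependent choice of $\pi^*\in\Pi_0$. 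Hence $\widehat k - \widehat Z_{\pi^*}\le \widehat k + q_{1-\beta,\Pi_0}$, and multiplying by the nonnegative constant $\widehat s(\pi^*) = \underline{\sigma}_{\Pi_0}/\sqrt n$ delivers the first summand. A union bound over $E_1\cap E_2$ then yields the probability $1-2\beta-2r_n$ and the stated display.

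Finally, the two corollaries require no further probabilistic work, only two monotonicity facts: $\Pi_0\subseteq\Pi$ gives $q_{1-\beta,\Pi_0}\le q_{1-\beta,\Pi}$, and $\underline{\sigma}_{\Pi_0}\le\bar{\sigma}_{\Pi}$. When $0\le\widehat k\lesssim q_{1-\beta,\Pi}$, both summands are bounded by a constant multiple of $(\bar{\sigma}_{\Pi}/\sqrt n)\,q_{1-\beta,\Pi}$; when instead $\widehat k\ge q_{1-\beta,\Pi}$ the positive-part term vanishes entirely, and if additionally $\widehat k\lesssim q_{1-\beta,\Pi}$ the surviving first summand is $\lesssim (\underline{\sigma}_{\Pi_0}/\sqrt n)\,q_{1-\beta,\Pi}$, the sharper rate. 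I expect the principal obstacle to be the second term: correctly converting a two-sided comparison into one governed by the single upper quantile $q_{1-\beta,\Pi_0}$ via Gaussian symmetry, while making sure the relevant events are rectangular so that~\eqref{eq:gaussian approx} applies and the bound holds uniformly over the random minimizer $\pi^*$.
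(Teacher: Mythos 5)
Your proof is correct and takes essentially the same route as the paper's: both compare the RW optimum against the least-risky policy in $\Pi_0$, split the regret into an upper-tail term over $\Pi$ (bounded by $\bar{\sigma}_{\Pi}(q_{1-\beta,\Pi}-\widehat k)_+/\sqrt{n}$) and a lower-tail term over $\Pi_0$ (bounded via Gaussian symmetry by $\underline{\sigma}_{\Pi_0}(q_{1-\beta,\Pi_0}+\widehat k)/\sqrt{n}$), and finish with a union bound. The only cosmetic difference is that the paper routes the decomposition through the intermediate quantity $L(\widehat k)=\max_{\pi\in\Pi}\{\widehat V(\pi)-\widehat k\,\widehat s(\pi)\}$, whereas you chain the optimality inequality directly; the two are algebraically identical.
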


As \(\beta\) decreases, the probability that the regret bound \eqref{Th 1 regret bound} of Proposition \ref{thm:RWregret} holds increases, since the failure probability is at most \(2\beta + 2r_n\). However, this comes at the cost of a looser bound because the quantile \(q_{1-\beta, \Pi}\) increases as \(\beta\) gets smaller. This reflects a basic trade-off: a smaller \(\beta\) provides a more conservative bound that holds with higher probability, while a larger \(\beta\) results in a tighter bound that is less certain to be valid. %

Proposition~\ref{thm:RWregret} provides dimension-free bounds. To obtain simpler dimension-based bounds, observe that  by standard concentration inequalities,\footnote{This follows from Lemma A.8 in \cite{BCCHK:18}. Talagrand-style upper bounds can also be used, but for this these simpler concentration bounds suffice.}
\[
q_{1-\beta,K}
\;\;\le\;\;
\mathrm{E}\Bigl(\sup_{\pi \in K} Z_\pi\Bigr) \;+\; \sqrt{2\log(1/\beta)} 
\;\;\le\;\;
\sqrt{2\log |K|} \;+\; \sqrt{2\log(1/\beta)}
\;=\;
u_{1-\beta, |K|}.
\]
We now see that the broad class of RW-rules has regret bounded by
\[
\bar{\sigma}_{\Pi}\,\sqrt{\frac{\log p}{n}},
\]
as \(p \to \infty\) (and \(n \to \infty\)). This class includes the EWM rule (\(k = 0\)) and matches known optimal minimax rates when available (see \cite{AW2021}). %

A key observation of Proposition~\ref{thm:RWregret} is that  choosing \(\widehat{k} > 0\) sufficiently large can yield a regret bound of
\[
\underline{\sigma}_{\Pi_0}\,\sqrt{\frac{\log p}{n}},
\]
which has the same rate but may have a \emph{much smaller multiplicative constant} -- $$\underline{\sigma}_{\Pi_0} <  \bar \sigma_{\Pi}$$ 
if there is significant heterogeneity in estimation risk across policies. The leading constant in the regret bound is governed by the least risky among the best policies, and can therefore be significantly smaller than that of the EWM rule.  The empirical risk minimization literature has also used penalties on estimation risk to improve error bounds, albeit in different contexts; see, for example, \cite{maurer2009empirical, swaminathan2015counterfactual, foster2023orthogonal}.

These ideas naturally lead to the next section, where we propose a main policy rule that carefully constructs a data-driven \(\widehat{k} \approx q_{1-\alpha, \Pi}\) to achieve small regret and provide additional reporting guarantees.%

\section{The PoLeCe Rule}\label{sec:RW interpretation}
\subsection{Reporting Guarantees}\label{Sec: PoLeCe and properties}
In practice, to obtain reporting guarantees for RW decisions, one needs to approximate \(q_{1-\alpha,\Pi}\). We do so via the bootstrap, where we set:
\[
\widehat{q}_{1-\alpha,\Pi}
\;:=\;
(1-\alpha)\text{-Quantile}
\Bigl(\max_{\pi \in \Pi} \widehat{Z}_\pi^*\Bigr),
\]
where \((\widehat{Z}_\pi^*)_{\pi \in \Pi} \sim N\bigl(0,\widehat{C}\bigr)\), and \(\widehat{C}\) is a consistent estimator of the covariance matrix \(C = \mathrm{Cov}((Z_\pi)_{\pi \in \Pi})\). Denote by \(\Pr^*\) the bootstrap-induced probability measure computed conditional on \(\widehat{C}\). This construction relies on the following condition:

\begin{itemize}
\item[(B)] With probability at least \(1-\delta_n\) (where \(\delta_n \searrow 0\)),
\begin{equation}
\Bigl\lvert
{\Pr}^*\!\bigl((\widehat{Z}_\pi^*)_{\pi \in \Pi} \in A\bigr)
\;-\;
\Pr\!\bigl((Z_\pi)_{\pi \in \Pi} \in A\bigr)
\Bigr\rvert
\;\le\;
r_n,
\quad
\text{for all } A \in \mathcal{A}. \label{eq:bootstrap approx}\tag{B}
\end{equation}
\end{itemize}

Like~\eqref{eq:gaussian approx}, condition~\eqref{eq:bootstrap approx} is satisfied under mild assumptions even when \(p\) is much larger than \(n\), and has been verified for a variety of estimation methods, including debiased machine learning.\footnote{See \cite{CCKK-AOS,CCK-AOP} and \cite{CCKK-Review} for more details, along with other references on ``approximate means'' settings such as debiased ML (\cite{belloni2018uniformly,quintas2022finite}).}

A key consequence is that, letting \(r_n' := 2r_n + \delta_n\), we have
\begin{equation}\label{eq:LCB}
\sup_{\pi \in \Pi} \widehat{Z}_\pi 
\;\;\le\;\;
\widehat{q}_{1-\alpha,\Pi}
\quad
\text{with probability at least } 1-\alpha-r_n',
\end{equation}
(see Lemma~\ref{lemma:quantile comparison} for a proof). Inequality \eqref{eq:LCB} then directly implies a uniform lower confidence bound (LCB) on the performance of all policies:

\begin{prop}[LCB Guarantees for RW Decisions]\label{prop:report}
With probability at least \(1-\alpha-r_n'\), the true value of any policy \(\pi \in \Pi\) is bounded below as follows:
\[
V(\pi)
\;\;\ge\;\;
LV_{1-\alpha}(\pi)
\;:=\;
\widehat{V}(\pi) \;-\; \widehat{q}_{1-\alpha,\Pi}\,\widehat{s}(\pi) \ \text{ for all } {\pi \in \Pi}.
\]
\end{prop}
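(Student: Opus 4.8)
The plan is to read the statement off directly from the uniform supremum control in~\eqref{eq:LCB}, which has already been established. The essential point is that~\eqref{eq:LCB} bounds the \emph{supremum} of the normalized error process on a single event, so the resulting confidence bound will hold \emph{simultaneously} across all $\pi \in \Pi$ with the stated probability, rather than merely pointwise. Concretely, I would first fix the event
\[
\mathcal{E} := \Bigl\{ \sup_{\pi \in \Pi} \widehat{Z}_\pi \le \widehat{q}_{1-\alpha,\Pi} \Bigr\},
\]
which by~\eqref{eq:LCB} satisfies $\Pr(\mathcal{E}) \ge 1 - \alpha - r_n'$.

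Next, working on $\mathcal{E}$, I would note that for every individual $\pi \in \Pi$ one has $\widehat{Z}_\pi \le \sup_{\pi' \in \Pi} \widehat{Z}_{\pi'} \le \widehat{q}_{1-\alpha,\Pi}$. Substituting the definition $\widehat{Z}_\pi = (\widehat{V}(\pi) - V(\pi))/\widehat{s}(\pi)$ and multiplying through by $\widehat{s}(\pi) > 0$ (positive since it is a consistent estimator of the estimation risk $s(\pi) > 0$) gives $\widehat{V}(\pi) - V(\pi) \le \widehat{q}_{1-\alpha,\Pi}\,\widehat{s}(\pi)$, and rearranging yields $V(\pi) \ge \widehat{V}(\pi) - \widehat{q}_{1-\alpha,\Pi}\,\widehat{s}(\pi) = LV_{1-\alpha}(\pi)$. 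Since $\pi$ was arbitrary and the same event $\mathcal{E}$ served for every $\pi$, the bound holds for all $\pi \in \Pi$ on $\mathcal{E}$, which is precisely the claim.

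There is no substantive obstacle here: the proposition is a corollary of~\eqref{eq:LCB}, and all of the analytic work—the Gaussian approximation~\eqref{eq:gaussian approx}, the bootstrap approximation~\eqref{eq:bootstrap approx}, and the quantile comparison of Lemma~\ref{lemma:quantile comparison} that together deliver~\eqref{eq:LCB}—has already been carried out upstream. The only point that genuinely deserves care is the \emph{simultaneity} of the bound: one must invoke the control of the supremum $\sup_{\pi}\widehat{Z}_\pi$ on a single event rather than controlling each $\widehat{Z}_\pi$ separately, since a union bound over individual pointwise bounds would degrade the confidence level. Carrying the single event $\mathcal{E}$ through the argument makes this transparent, and the positivity of $\widehat{s}(\pi)$—needed to preserve the inequality direction when clearing the denominator—is the only auxiliary requirement, which is in any case implied by the consistency of the risk estimates noted after condition~\eqref{eq:gaussian approx}.
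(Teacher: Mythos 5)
Your proof is correct and follows essentially the same route as the paper's: both reduce the claim to the event $\{\sup_{\pi \in \Pi} \widehat{Z}_\pi \le \widehat{q}_{1-\alpha,\Pi}\}$ via the algebraic equivalence with the uniform lower bound, and both source the probability $1-\alpha-r_n'$ from the quantile-comparison machinery of Lemma~\ref{lemma:quantile comparison}. The only difference is presentational: you invoke~\eqref{eq:LCB} as already established, while the paper's appendix proof re-derives that bound explicitly by intersecting the event $\mathcal{E}_\Pi$ with the event $\mathcal{K}_\Pi = \{\kappa_{1-\alpha-2r_n,\Pi} \le \widehat{q}_{1-\alpha,\Pi} \le q_{1-\alpha+r_n,\Pi}\}$ from Lemma~\ref{lemma:quantile comparison} and applying a union bound.
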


We now introduce \emph{policy learning with confidence} (PoLeCe), a risk-aware rule that provides a key reporting guarantee by maximizing the LCB on policy welfare:
\begin{equation}
\piPoLeCe 
\;\;\in\;\; 
\argmax_{\pi \in \Pi} 
LV_{1-\alpha}(\pi).\label{PoLeC2}
\end{equation}
Notably, PoLeCe is an RW decision that uses the bootstrap quantile
$$\widehat{k} = \widehat{q}_{1-\alpha,\Pi}
$$ as the penalty for estimation risk. Furthermore, the maximized LCB
\[
LV_{1-\alpha,\Pi}
\;:=\;
\max_{\pi \in \Pi} 
LV_{1-\alpha}(\pi)
\;=\;
\max_{\pi \in \Pi}
\Bigl\{
\widehat{V}(\pi)
\;-\;
\widehat{q}_{1-\alpha,\Pi}
\,\widehat{s}(\pi)
\Bigr\}
\]
provides a natural performance guarantee since $V\left(\widehat{\pi}_{PoLeCe}\right) \geq LV_{1-\alpha,\Pi}$ with probability at least $1-\alpha - r_n^{\prime}$ by Proposition \ref{prop:report}.

\begin{remark}[PoLeCe as Minimizer of the UCB on Regret]
We can also view \(\piPoLeCe\) as minimizing an upper confidence bound on regret. Indeed, for any \(\pi \in \Pi\),
\[
\mathrm{Regret}(\pi)
\;=\;
V_{\max} - V(\pi)
\;\;\le\;\;
V_{\max} 
\;-\;
\bigl(\widehat{V}(\pi) - \widehat{k}\,\widehat{s}(\pi)\bigr)
\]
with probability at least \(1-\alpha-r_n'\). Hence, by its construction, \(\piPoLeCe\) minimizes this UCB on regret:
\[
\min_{\pi \in \Pi}
\Bigl\{
V_{\max}
\;-\;
\bigl(\widehat{V}(\pi) - \widehat{k}\,\widehat{s}(\pi)\bigr)
\Bigr\}.
\]
\end{remark}

The following result establishes formal properties of the PoLeCe rule.
\begin{prop}[Reporting Guarantees and Regret Bounds for PoLeCe]\label{thm:Polece}
With probability at least \(1-\alpha-r_n'\),
\[
V_{\max}
\;\;\ge\;\;
V(\piPoLeCe)
\;\;\ge\;\;
LV_{1-\alpha,\Pi}.
\]
Moreover, with probability at least $1-\alpha - \beta - r_n - r_n' $, the regret of \(\piPoLeCe\) satisfies
\[
V_{\max} - V(\piPoLeCe)
\;\;\le\;\;
V_{\max} - LV_{1-\alpha,\Pi}
\;\;\le\;\;
\bigl(\underline{\sigma}_{\Pi_0}/\sqrt{n}\bigr)
\Bigl\{
q_{1-\beta,\Pi_0}
\;+\;
q_{1-\alpha+r_n,\Pi}
\Bigr\},
\]
where
\(\underline{\sigma}_{\Pi_0} = \sqrt{n}\,\min_{\pi \in \Pi_0} \widehat{s}(\pi)\)
is determined by the \emph{least risky} policy among those with maximal true policy welfare.
\end{prop}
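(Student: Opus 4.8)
The plan is to separate the reporting guarantee (the first display) from the regret bound (the second display), proving the former directly from Proposition~\ref{prop:report} and reducing the latter to a single deterministic inequality controlled by two high-probability events.

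For the reporting guarantee, the upper bound $V_{\max}\ge V(\piPoLeCe)$ is immediate since $\piPoLeCe\in\Pi$ and $V_{\max}=\sup_{\pi\in\Pi}V(\pi)$. For the lower bound I would invoke Proposition~\ref{prop:report}: on its event of probability at least $1-\alpha-r_n'$ we have $V(\pi)\ge LV_{1-\alpha}(\pi)$ for every $\pi\in\Pi$, and specializing to $\pi=\piPoLeCe$, which maximizes $LV_{1-\alpha}(\cdot)$ by its definition in~\eqref{PoLeC2}, yields $V(\piPoLeCe)\ge LV_{1-\alpha}(\piPoLeCe)=LV_{1-\alpha,\Pi}$. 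This is simultaneously the first inequality of the regret display, so it remains only to bound $V_{\max}-LV_{1-\alpha,\Pi}$.

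The crux is a deterministic lower bound on $LV_{1-\alpha,\Pi}$ obtained by evaluating its defining maximum at one well-chosen policy. I would take $\pi^{*}\in\argmin_{\pi\in\Pi_0}\widehat s(\pi)$, the least risky maximizer, so that $\widehat s(\pi^{*})=\underline{\sigma}_{\Pi_0}/\sqrt n$ and $V(\pi^{*})=V_{\max}$. Writing $\widehat V(\pi^{*})=V(\pi^{*})+\widehat s(\pi^{*})\widehat Z_{\pi^{*}}$ from the definition of the normalized error $\widehat Z_\pi$,
\[
LV_{1-\alpha,\Pi}\;\ge\;\widehat V(\pi^{*})-\widehat q_{1-\alpha,\Pi}\,\widehat s(\pi^{*})\;=\;V_{\max}+\frac{\underline{\sigma}_{\Pi_0}}{\sqrt n}\bigl(\widehat Z_{\pi^{*}}-\widehat q_{1-\alpha,\Pi}\bigr),
\]
so $V_{\max}-LV_{1-\alpha,\Pi}\le(\underline{\sigma}_{\Pi_0}/\sqrt n)\bigl(\widehat q_{1-\alpha,\Pi}-\widehat Z_{\pi^{*}}\bigr)$ with no probability cost. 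It then suffices to bound $\widehat q_{1-\alpha,\Pi}\le q_{1-\alpha+r_n,\Pi}$ and $-\widehat Z_{\pi^{*}}\le q_{1-\beta,\Pi_0}$. The first is the quantile comparison underlying Lemma~\ref{lemma:quantile comparison}: on the bootstrap event of~\eqref{eq:bootstrap approx}, ${\Pr}^{*}(\max_\pi\widehat Z_\pi^{*}\le t)\ge\Pr(\max_\pi Z_\pi\le t)-r_n$ for all $t$, and evaluating at $t=q_{1-\alpha+r_n,\Pi}$ gives the bound. For the second, I note $-\widehat Z_{\pi^{*}}\le\max_{\pi\in\Pi_0}(-\widehat Z_\pi)$, that $\{z:z_\pi\ge-t\ \forall\pi\in\Pi_0\}$ is a rectangular set so~\eqref{eq:gaussian approx} applies, and that Gaussian symmetry $(-Z_\pi)_\pi\overset{d}{=}(Z_\pi)_\pi$ forces $\max_{\pi\in\Pi_0}(-Z_\pi)$ to share the quantile $q_{1-\beta,\Pi_0}$ of $\max_{\pi\in\Pi_0}Z_\pi$; hence $\max_{\pi\in\Pi_0}(-\widehat Z_\pi)\le q_{1-\beta,\Pi_0}$ with probability at least $1-\beta-r_n$.

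The main obstacle is the probability accounting: a naive union bound double-counts the bootstrap slack $\delta_n$, because the lower quantile bound $\widehat q_{1-\alpha,\Pi}\ge q_{1-\alpha-r_n,\Pi}$ used to establish Proposition~\ref{prop:report} and the upper bound $\widehat q_{1-\alpha,\Pi}\le q_{1-\alpha+r_n,\Pi}$ used here both rest on the same event from~\eqref{eq:bootstrap approx}. The remedy is to work on the intersection of three events and invoke the bootstrap event only once: (i) the single event of~\eqref{eq:bootstrap approx}, which simultaneously yields the two-sided bounds $q_{1-\alpha-r_n,\Pi}\le\widehat q_{1-\alpha,\Pi}\le q_{1-\alpha+r_n,\Pi}$ and fails with probability at most $\delta_n$; (ii) the event $\sup_{\pi\in\Pi}\widehat Z_\pi\le q_{1-\alpha-r_n,\Pi}$ from~\eqref{eq:gaussian approx}, which together with (i) gives the reporting guarantee and fails with probability at most $\alpha+2r_n$; and (iii) the event $\max_{\pi\in\Pi_0}(-\widehat Z_\pi)\le q_{1-\beta,\Pi_0}$, failing with probability at most $\beta+r_n$. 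On (i)$\cap$(ii)$\cap$(iii) the entire chain holds, and summing failures gives $\delta_n+(\alpha+2r_n)+(\beta+r_n)=\alpha+\beta+r_n+r_n'$ upon recalling $r_n'=2r_n+\delta_n$, which is exactly the claimed confidence level.
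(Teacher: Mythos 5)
Your proof is correct and follows essentially the same route as the paper's: both establish the reporting guarantee by specializing Proposition~\ref{prop:report} to the LCB-maximizer, bound the regret by evaluating the lower confidence bound at the least-risky optimal policy $\widehat\pi_0\in\argmin_{\pi\in\Pi_0}\widehat s(\pi)$, and then combine the quantile comparison of Lemma~\ref{lemma:quantile comparison} (giving $\widehat q_{1-\alpha,\Pi}\le q_{1-\alpha+r_n,\Pi}$) with the bound $\max_{\pi\in\Pi_0}(-\widehat Z_\pi)\le q_{1-\beta,\Pi_0}$ and a union bound to reach the level $1-\alpha-\beta-r_n-r_n'$. Your explicit handling of the Gaussian symmetry step and your care to invoke the bootstrap event only once are the same bookkeeping the paper performs implicitly through the joint event $\mathcal{E}_\Pi\cap\mathcal{K}_\Pi$, so this is a difference of presentation rather than of method.
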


The first result in Proposition~\ref{thm:Polece} shows that $LV_{1-\alpha, \Pi}$ provides a valid high-confidence lower bound on the true welfare of the PoLeCe rule. The second result sharpens the general regret bounds presented in Proposition~\ref{thm:RWregret}. As noted following Proposition~\ref{thm:RWregret}, the parameter $\beta$ governs the trade-off between failure probability and bound tightness, while the constant $\alpha$, chosen by the researcher, determines the confidence level of the lower bound $LV_{1-\alpha, \Pi}$. The refinement in Proposition~\ref{thm:Polece} depends on the relationship between $\alpha$ and $\beta$. When $\alpha = \beta$, Propositions~\ref{thm:RWregret} and \ref{thm:Polece} yield qualitatively identical guarantees. When $\alpha < \beta$, the bounds in both propositions remain qualitatively identical, but the bound in Proposition~\ref{thm:Polece} holds with higher probability. When $\alpha > \beta$, the bound in Proposition~\ref{thm:Polece} is tighter, but it may hold with lower probability. For a large $\alpha$, the bound may fail to hold; in such cases, the general bound from Proposition~\ref{thm:RWregret} remains valid and informative, as it applies to all risk-aware decision rules.

\subsection{Discussion and Related Literature}\label{sec: related lit p certified and winners}

The PoLeCe rule maximizes $LV_{1-\alpha}(\pi)$, the welfare estimate of each policy $\pi$, penalized by the critical value $\widehat{q}_{1-\alpha,\Pi}$ times its standard error $\widehat{s}(\pi)$. The function $LV_{1-\alpha}(\cdot)$ provides a $1-\alpha$ lower confidence band for $V(\cdot)$ following the construction of \cite{CLR2013}. Thus, the PoLeCe rule can be obtained by minimizing worst-case loss over a $1-\alpha$ lower confidence band for welfare. As such, it falls within the class of rules \cite{Manski2021} calls ``as-if optimization with set estimates''. Section 3.2 of \cite{Manski2021} suggested that one might consider decision making using confidence sets, but left this to future research.  

The contemporaneously developed working paper \cite{andrews2025certifieddecisions} introduces the term \textit{P-certified decisions} for the class of rules that ensure an upper bound on loss -- equivalently a lower bound on welfare -- with probability at least $1 - \alpha$. \cite{andrews2025certifieddecisions} argue that such decisions are useful for risk-averse decision makers and show that ``as-if'' decisions using confidence sets form an essentially complete class of P-certified decisions. A similar finding is made in \cite{Kiyani/Pappas/Roth/Hassani:25}, which studies rules that maximize worst-case loss over conformal prediction sets. Within this class, those that comprise upper contour sets of conditional quantiles of utility are found to be optimal. Use of these sets for as-if optimization with set estimates differs from the use of upper confidence bands that corresponds to the PoLeCe rule. \cite{Ben-Michael/Greiner/Imai/Jiang:25} also propose a P-certified decision rule for criminal release recommendations. Their rule again differs from ours, as it is based on a two-sided confidence set for expected utility.\footnote{The setting studied by \cite{Ben-Michael/Greiner/Imai/Jiang:25} has other distinguishing features, for example using data based on deterministic functions of pre-trial risk assessment scores, and the need to take on the challenge of partially identified policy welfare.} While the PoLeCe rule falls within the class of P-certified decisions studied in \cite{andrews2025certifieddecisions}, its P-certificate is based on confidence bands of the type studied in \cite{CLR2013} and it thus differs from the other policy rules studied within this class. A convenient feature of the PoLeCe rule is that it is delivered directly by maximizing $LV_{1-\alpha}(\pi)$ in a single stage without requiring explicitly computing a maximin rule over a set estimate. Furthermore, it provides a rule that lies on the efficient decision frontier with the balance between estimated performance and sample variation pinned down by the critical value corresponding to the planner's desired confidence level.

P-certified decisions are also connected to the recent literature on inference on winners, e.g. \cite{BenjaminiSelected}, \cite{andrews2024inference}, \citet{Zrnic/Fithian:24,Zrnic/Fithian:25}. These approaches provide confidence sets for the welfare of the selected choice that has the highest empirical welfare in-sample. Inference approaches for optimal assignment and/or welfare in the population include those in the online supplement of \cite{KT2018}, \cite{Rai2019}, \cite{Armstrong/Shen:2023}, and \cite{ponomarev2024lowerconfidencebandoptimal}. Proposition \ref{thm:Polece} shows that the PoLeCe rule provides a coverage guarantee for both the welfare of the selected policy and the optimal rule.

Recent alternatives that account for sampling uncertainty in policy selection include \cite{Sun2021} and \cite{Moon:25}. \cite{Sun2021} considers policy selection when the decision maker faces a budget constraint and there is uncertainty regarding the cost of the policy. \cite{Moon:25} proposes an empirical Bayes approach to deal with statistical uncertainty in the welfare of different policies. The approach in this paper differs from each of these.

\section{Empirical Examples of Investment Allocations for Public Programs}\label{sec: empirical examples}

This section continues the discussion from the introduction, and illustrates how to solve the problem of investment allocations for public programs described in Section~\ref{sec:motivation and risk-aware policy choice}. We obtain MVPF estimates $\widehat{R}$ from the Policy Impacts Library, \cite{policy_impacts_library}. There are 172 programs in total, and we focus on those in the United States whose MVPF is estimated by an RCT. We assume the reported upper and lower bounds correspond to 95\% confidence intervals and infer the standard errors for each $\widehat{R}$ from these bounds, yielding $14$ programs.\footnote{We exclude programs for which the upper and lower bounds are infinite.} %
We also assume the estimation errors across programs are independent, so their covariance matrix is diagonal, with the squared standard errors on the diagonal. Note that here we only account for estimation error, holding the specification in the original papers that produced these MVPF estimates fixed. 
 Alternative specifications can lead to different MVPF estimates for the same program. For example, accounting for transfers to parents  could lead to larger MVPF estimates for early childhood programs.

In the following illustrations, we implement PoLeCe  by reformulating the problem as a root-finding task that involves solving  second-order cone programs, which has superior computational
efficiency relative to grid search as detailed in Appendix~\ref{sec:bootstrap conic root}. We set $\alpha=0.05$ throughout.

Table~\ref{tab:example mvpf adult} reports the allocation selected by PoLeCe and EWM among programs targeting individuals older than 25.   As expected, the EWM rule is a corner solution that places all investment in the highest MVPF program according to the empirical estimates, Holistic Wrap-around Services Can Improve Employment Rates. The PoLeCe rule, in contrast to EWM, places weight on Medicaid for single adults (OHIE, Single Adults) and job training for adults (JTPA) as their MVPF estimates are among the highest, and are highly precise. %
\begin{center}
    [TABLE~\ref{tab:example mvpf adult} AROUND HERE]
\end{center}

Table~\ref{tab:example mvpf youth} reports the allocation selected by PoLeCe and EWM among programs targeting individuals aged 25 and under.  Compared to allocations for programs for those over age 25 in Table~\ref{tab:example mvpf adult}, the PoLeCe solutions here allocate a large share towards programs whose MVPF estimates are less precise.  This is because  $\widehat{q}_{0.95, \Pi}$ is smaller, resulting in less aversion to estimation uncertainty.

\begin{center}
    [TABLE~\ref{tab:example mvpf youth} AROUND HERE]
\end{center}

\section{Conclusion}\label{sec: conclusion}

In this paper we have focused on the problem faced by a DM who wishes to select from a menu of policies to maximize welfare using imperfect sample estimates. In order to balance the estimated performance of each policy with its associated statistical uncertainty, we proposed and analyzed the properties of a class of risk-aware policies that make the precision/performance tradeoff explicit.
Such policies achieve favorable regret properties. We proposed a specific rule from the class of risk-aware policies, namely the PoLeCe rule, which uses a data-dependent construction for balancing the inherent tradeoff between estimated performance and sample uncertainty, such that a lower confidence bound on the welfare obtained by the chosen policy is automatically provided. %

A large body of work synthesized in Manski (2013, 2019, 2024) has advocated for greater acknowledgement and incorporation of uncertainty in planning problems.\nocite{Manski:PubPolUncertain}\nocite{Manski:2019}\nocite{Manski:Discourse}  This paper contributes by proposing a principled way to acknowledge and incorporate statistical uncertainty into decision making. Statistical uncertainty is however only one of the many different types of uncertainty that may be present.\footnote{For example, \cite{Manski:2019} discusses transitory uncertainty, permanent uncertainty, and conceptual uncertainty. The statistical uncertainty considered here is one type of permanent uncertainty.} We have focused on settings in which the welfare of each policy is point identified, such that consistent estimates of the welfare and sample variance of the policies are available. Application of the concepts developed here to settings that feature so-called ``deep uncertainty'', or ambiguity, would seem an important direction for further development given the large number of settings in which the mean performance of various policies is only credibly partially identified.\footnote{The literature on treatment choice when mean treatment performance is partially identified goes back to \cite{Manski:2000}. Recent research with references to the broader literature includes \cite{Russell:20}, \cite{Ishihara/Kitagawa:21}, \cite{Yata2021}, \cite{Christensen/Moon/Schorfheide:23}, \cite{Kido:23}, \cite{Olea/Qiu/Stoye:23}, and \cite{Ben-Michael/Greiner/Imai/Jiang:25}.}
This would require balancing statistical uncertainty with a collection of interval estimates for each policy's performance.

While the nuance required to formally develop such an approach is beyond the scope of the present paper, a rough prescription could be made based on an extension of the reporting guarantee established here. To see how, consider the definition of the PoLeCe rule in \eqref{PoLeC2}. The rule selects the policy that maximizes a $1-\alpha$ lower confidence band for the maximal achievable welfare. Construction of such a confidence band does not require that the optimal level of welfare or the optimal rule  be point identified, and could be implemented using techniques developed in e.g. \cite{CLR2013} under partial identification. This is of course not the only way to balance statistical uncertainty and ambiguity. A more thorough study of the performance of such an approach, and the implicit tradeoff between statistical uncertainty and ambiguity due to partial identification would seem a useful direction for future research.

\newpage
\bibliography{PoLeCe}
\bibliographystyle{aer}
\newpage
\appendix

\section{Proofs}

In what follows let
$$
\kappa_{1-\alpha, K}:= (1-\alpha)\text{-Quantile} \left(\sup_{\pi \in {K}} {\widehat Z}_{\pi}\right)
$$
$$
q_{1-\alpha, K}:= (1-\alpha)\text{-Quantile} \left(\sup_{\pi \in {K}} {Z}_{\pi}\right)
$$$$
\widehat q_{1-\alpha, K}:= (1-\alpha)\text{-Quantile}^*\left(\sup_{\pi \in {K}} {Z}^*_{\pi}\right)
$$

\begin{lemma}[Quantile Comparison]\label{lemma:quantile comparison}With probability at least $1-\delta_n$, critical values obey the following inequalities:
$$
\kappa_{1-\alpha- 2 r_n,K } \leq
q_{1-\alpha-  r_n,K } \leq \widehat q_{1-\alpha, K} 
\leq q_{1-\alpha +  r_n,K }
$$
Furthermore,
$$
q_{1-\alpha,K }
\leq \mathrm{E} \sup_{\pi \in {K}} Z_{\pi} + \sqrt{2 \log {1/\alpha}} 
\leq \sqrt{2 \log |K|} + \sqrt{2 \log {1/\alpha}}. 
$$
\end{lemma}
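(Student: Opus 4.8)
The plan is to reduce the entire first display to the single observation that the event $\{\sup_{\pi\in K} z_\pi \le t\}$ is a one-sided hyperrectangle in $\mathbb{R}^p$, hence an element of the class $\mathcal{A}$ to which the approximation conditions apply. Writing $A_t := \{z : z_\pi \le t \text{ for all } \pi \in K\} \in \mathcal{A}$, I would introduce the three cumulative distribution functions $G(t) := \Pr\bigl((\widehat{Z}_\pi)_\pi \in A_t\bigr) = \Pr(\sup_{\pi\in K}\widehat{Z}_\pi \le t)$, $F(t) := \Pr(\sup_{\pi\in K} Z_\pi \le t)$, and $H(t) := \Pr^*(\sup_{\pi\in K} Z^*_\pi \le t)$, whose generalized inverses at level $1-\alpha$ are exactly $\kappa_{1-\alpha,K}$, $q_{1-\alpha,K}$, and $\widehat{q}_{1-\alpha,K}$. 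Applying condition~\eqref{eq:gaussian approx} to $A_t$ yields $|G(t)-F(t)|\le r_n$ for every $t$, and applying condition~\eqref{eq:bootstrap approx} yields, on an event of probability at least $1-\delta_n$, $|H(t)-F(t)|\le r_n$ for all $t$ simultaneously. This is the source of the $1-\delta_n$ qualifier in the statement, since \eqref{eq:gaussian approx} holds deterministically while \eqref{eq:bootstrap approx} holds only on that high-probability event.

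With these uniform CDF bounds in hand, each of the three inequalities follows from the standard generalized-inverse argument, each being a one-line manipulation. For the left inequality I set $t^\ast = q_{1-\alpha-r_n,K}$, so that $F(t^\ast)\ge 1-\alpha-r_n$ by right-continuity of $F$; then $G(t^\ast)\ge F(t^\ast)-r_n \ge 1-\alpha-2r_n$, and since $\kappa_{1-\alpha-2r_n,K}$ is the infimum of $t$ with $G(t)\ge 1-\alpha-2r_n$, it is at most $t^\ast$. The middle inequality $q_{1-\alpha-r_n,K}\le\widehat{q}_{1-\alpha,K}$ follows by evaluating $F$ at $\widehat{q}_{1-\alpha,K}$ via $F\ge H-r_n$, and the right inequality $\widehat{q}_{1-\alpha,K}\le q_{1-\alpha+r_n,K}$ by evaluating $H$ at $q_{1-\alpha+r_n,K}$ via $H\ge F-r_n$.

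For the second display I would invoke Gaussian concentration for the supremum of a Gaussian process. Because $(Z_\pi)_{\pi\in K}$ has standard normal marginals, $\sup_\pi \mathrm{Var}(Z_\pi)=1$, so the Borell--Tsirelson--Ibragimov--Sudakov inequality (equivalently Lemma A.8 of \cite{BCCHK:18}) gives $\Pr\bigl(\sup_{\pi\in K}Z_\pi \ge \mathrm{E}\sup_{\pi\in K}Z_\pi + t\bigr)\le e^{-t^2/2}$. Taking $t=\sqrt{2\log(1/\alpha)}$ equates the right-hand side to $\alpha$, so $\mathrm{E}\sup_{\pi\in K}Z_\pi + \sqrt{2\log(1/\alpha)}$ dominates the $(1-\alpha)$-quantile $q_{1-\alpha,K}$, which is the first inequality. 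The second inequality is the classical maximal inequality $\mathrm{E}\max_{\pi\in K}Z_\pi \le \sqrt{2\log|K|}$ for the maximum of $|K|$ standard normal variables.

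The genuinely load-bearing idea is the rectangular representation of the supremum event, which converts the abstract distributional approximation conditions into uniform-in-$t$ control of the three CDFs; once that is secured the quantile inequalities are immediate. The only points requiring care are bookkeeping rather than substance: verifying that the one-sided box $A_t$ is admissible in $\mathcal{A}$ so that both \eqref{eq:gaussian approx} and \eqref{eq:bootstrap approx} may legitimately be applied to it, and handling the generalized-inverse definition cleanly so that the weak inequalities $F(q_{1-\gamma,K})\ge 1-\gamma$ hold at each of the shifted levels $\gamma\in\{\alpha-r_n,\alpha,\alpha+r_n,\alpha+2r_n\}$.
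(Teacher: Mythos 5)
Your proposal is correct and follows essentially the same route as the paper's own proof: both apply conditions \eqref{eq:gaussian approx} and \eqref{eq:bootstrap approx} to the rectangular supremum events to obtain uniform CDF bounds (the bootstrap bound holding on the $1-\delta_n$ event), deduce the quantile chain by the standard generalized-inverse argument, and prove the second display via Borell--Sudakov--Tsirelson concentration together with the bound $\mathrm{E}\sup_{\pi\in K}Z_\pi \le \sqrt{2\log|K|}$. Your write-up is in fact slightly more explicit than the paper's, which asserts the quantile implications directly from the CDF bounds without spelling out each generalized-inverse step.
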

We also note that a lower bound of the form $\sqrt{ \log |K|}$ holds, provided that $Z_{\pi}'s$ are not too correlated, using Sudakov's minoration. The upper bound here follows from standard Gaussian concentration bounds. %

\subsection*{Proof of Lemma \ref{lemma:quantile comparison}.} By Condition (G) we have
$$
\sup_{{K} \subset \Pi}\sup_{x \in \mathbb{R}} \left| \Pr \left(\sup_{\pi \in {K}} \widehat {Z}_{\pi} \leq x \right) - \Pr \left(\sup_{\pi \in {K}} {Z}_{\pi} \leq x \right) \right| \leq r_n,
$$
By Condition (B) we have with probability at least $1- \delta_n$:
$$
\sup_{{K} \subset \Pi}\sup_{x \in \mathbb{R}} \left| {\Pr}^*\left(\sup_{\pi \in {K}} \widehat {Z}^*_{\pi} \leq x \right) - \Pr\left(\sup_{\pi \in {K}} {Z}_{\pi} \leq x \right) \right| \leq r_n,
$$
These relations imply that w.p. $\geq 1-\delta_n$:
$$
(1 -  \alpha -2 r_n)- \text{Quantile} \left( \sup_{\pi \in {K}} \widehat {Z}_{\pi} \right) \leq (1 - \alpha-r_n)-\text{Quantile} \left(\sup_{\pi \in {K}} {Z}_{\pi}\right)
$$
$$
\leq 
(1 - \alpha) \text{-Quantile}^* \left(\sup_{\pi \in {K}} {Z}^*_{\pi}\right)
\leq 
(1 - \alpha + r_n) \text{-Quantile} \left(\sup_{\pi \in {K}} {Z}_{\pi}\right).
$$
Thus, the first claim follows.

The second claim follows by the Gaussian concentration inequality of Borell-Sudakov-Tsirelson:
\[
q_{1-\alpha, K } \leq \mathrm{E} \sup_{\pi \in {K}} Z_{\pi} + \sqrt{2 \log {1/\alpha}},
\]
and by standard calculation we have $\mathrm{E} \sup\limits_{\pi \in {K}} Z_\pi \leq \sqrt{2 \log | K|}$, where $|{K}| \leq p$ is the cardinality of ${K}$. \qed

\subsection*{Proof of Proposition \ref{thm:RWregret}}
Let 
$$
\Delta^+_\Pi :=
\max_{\pi \in \Pi} ( \widehat Z_{\pi} ); \quad 
\Delta^-_\Pi :=
\min_{\pi \in \Pi} ( \widehat Z_{\pi} );  
$$$$
\bar \sigma_\Pi := \sqrt{n} \max_{\pi \in \Pi} \widehat s (\pi); \quad \widehat \pi:= \widehat\pi_{\text{RW}}(\widehat k);
$$$$
\underline{\sigma}_{\Pi_0}: =
\min_{\pi \in \Pi_0} \sqrt{n} \widehat s (\pi); \quad
\widehat \pi_0 = \arg \min_{\pi \in \Pi_0} \sqrt{n} \widehat s( \pi);
$$$$
L(\widehat k) := \max_{ \pi \in \Pi} \{ \widehat V(\pi) - \widehat k \widehat s( \pi) \} = \widehat V(\widehat \pi) - \widehat k \widehat s( \widehat \pi).
$$

We then have
\begin{eqnarray*}
L(\widehat k) -V_{\max} & = & \widehat V(\widehat \pi) - \widehat k \widehat s( \widehat \pi) - V_{\max}\\
& \geq &   \widehat V(\widehat \pi_0) - \widehat k \widehat s( \widehat \pi_0) - V_{\max}   \\
& = &   V(\widehat \pi_0) +(\widehat Z_{\widehat \pi_0} - \widehat k) \widehat s(\widehat \pi_0)  - V_{\max}  \\
& \geq &   (\widehat Z_{\widehat \pi_0} - \widehat k) \widehat s(\widehat \pi_0)  \\
& = &   (\widehat Z_{\widehat \pi_0} - \widehat k) \underline \sigma_{\Pi_0}/\sqrt{n}   \\
& \geq &     (\Delta^-_{\Pi_0} - \widehat k) \underline \sigma_{\Pi_0}/\sqrt{n} . 
    \end{eqnarray*}
    
Also, we have
\begin{eqnarray*}
 L(\widehat k) - V(\widehat \pi) 
 & = &  \widehat V(\widehat \pi) - \widehat k \widehat s (\widehat \pi) - V(\widehat \pi)  \\
 & = & \widehat Z_{\widehat \pi} \widehat s (\widehat \pi) - \widehat k \widehat s(\widehat \pi)  \\
 & \leq &  (\widehat Z_{\widehat \pi}- \widehat k)_+ \widehat s(\widehat \pi) \\
  & \leq &  (\widehat Z_{\widehat \pi}- \widehat k)_+ \bar \sigma_\Pi/\sqrt{n} \\
& \leq & (\Delta^+_\Pi - \widehat k)_+ \bar \sigma_\Pi/\sqrt{n}
    \end{eqnarray*}
Combining the two bounds, we have
\begin{eqnarray*}
V_{\max}- V(\widehat \pi) 
& = & V_{\max} - L(\widehat k) + L(\widehat k) - V(\widehat \pi) \\
&\leq & (-\Delta^-_{\Pi_0} +\widehat k) \underline\sigma_{\Pi_0}/\sqrt{n} +(\Delta^+_\Pi - \widehat k)_+ \bar \sigma_\Pi/\sqrt{n}
\end{eqnarray*}
 By the proof of Lemma
\ref{lemma:quantile comparison},
we have that
$$
\Pr (\Delta^{+}_\Pi \leq q_{1- \beta, \Pi}) \geq 1- \beta - r_n, \quad\Pr (-\Delta^{-}_{\Pi_0} \leq q_{1- \beta, \Pi_0}) \geq 1- \beta - r_n.
$$
It then follows that with probability at least $1-2\beta -2 r_n$, we have
$$
V_{\max}- V(\widehat \pi)  \leq 
(q_{1-\beta,\Pi_0} +\widehat k) \underline\sigma_{\Pi_0}/\sqrt{n} +(q_{1-\beta, \Pi} - \widehat k)_+ \bar \sigma_\Pi/\sqrt{n}.$$
\qed

\subsection*{Proof of Proposition \ref{prop:report}}
Consider the event
\begin{eqnarray*}
\mathcal{E}_\Pi &:=& \{ V(\pi) \geq \widehat V(\pi) - \widehat k \widehat s(\pi), \ \forall \pi \in \Pi\} \\
& = & \{ - \widehat Z_{\pi} \geq   - \widehat k, \ \forall \pi \in \Pi\} \\ 
& = & \{ \max_{\pi \in \Pi }\widehat Z_{\pi} \leq  \widehat k\}.
\end{eqnarray*}
By Lemma \ref{lemma:quantile comparison}, with our bootstrap choice $\widehat k= \widehat q_{1- \alpha, \Pi}$ used for $\piPoLeCe$, the event
$$\mathcal{K}_\Pi := \bigl\{\kappa_{1- \alpha -2r_n, \Pi} \leq\widehat k
\leq q_{1- \alpha+r_n, \Pi} \bigr\}$$ holds with probability at least $1- \delta_n$. The event $\mathcal{E}_\Pi$  occurs with probability at least that of both $\mathcal{E}_\Pi$ and $\mathcal{K}_\Pi$ jointly. It follows that the event $\mathcal{E}_\Pi$ occurs
with probability at least $1-\alpha- r_n'$ for
$$
r_n' = 2  r_n + \delta_n,
$$ so the claim follows. \qed

\subsection*{Proof of Proposition \ref{thm:Polece}}
The first claim was shown in the proof of the previous proposition.

We next show the second claim about the regret bound.  The events $\mathcal{E}_\Pi$ and $\mathcal{K}_\Pi$ from Proposition 2  jointly hold with probability at least $1-\alpha-r_n^{\prime}$. Note that $\mathcal{K}_\Pi$ implies that $\widehat q_{1- \alpha, \Pi} \leq q_{1- \alpha+r_n, \Pi}$.

  Since $V_{\max} = V (\widehat \pi_0)$, under $\mathcal{E}_\Pi$ and $\mathcal{K}_\Pi$ we have
\begin{eqnarray*} V_{\max}-V(\piPoLeCe) & \leq &  V_{\max} - 
\max_{\pi \in \Pi} \{\widehat V(\pi) - \widehat q_{1- \alpha, \Pi} \widehat s(\pi)\} \\ 
& \leq &  V(\widehat \pi_0)
- (\widehat V(\widehat \pi_0) - \widehat q_{1- \alpha, \Pi} \widehat s(\widehat \pi_0)) \\
& = & -\widehat  Z_{\widehat \pi_0} \widehat s(\widehat \pi_0) + \widehat q_{1- \alpha, \Pi} \widehat s(\widehat \pi_0)  \\
& = &  
(\underline{\sigma}_{\Pi_0}/\sqrt{n}) (\widehat q_{1- \alpha, \Pi} - \widehat  Z_{\widehat \pi_0} ) \\  
& \leq & 
(\underline{\sigma}_{\Pi_0}/\sqrt{n}) (q_{1- \alpha+r_n, \Pi} - \Delta^-_{\Pi_0}).
\end{eqnarray*}
By the proof of Lemma~\ref{lemma:quantile comparison} we have that
$ \Pr (-\Delta^{-}_\Pi \leq q_{1- \beta, \Pi_0}) \geq 1- \beta - r_n$. Thus
\begin{equation*}
	V_{\max}-  V(\piPoLeCe) \leq V_{\max}- LV_{1-\alpha,\Pi} \leq (\underline{\sigma}_{\Pi_0}/\sqrt{n}) ( q_{1- \alpha+r_n, \Pi} + q_{1- \beta, \Pi_0})
\end{equation*}
with probability at least $1-\alpha- \beta - r_n - r_n'$, since we have also used that $\mathcal{E}_\Pi$ and $\mathcal{K}_\Pi$ both hold, and since
$$
LV_{1-\alpha,\Pi} = \max_{\pi \in \Pi} \{\widehat V(\pi) - \widehat q_{1- \alpha, \Pi} \widehat s(\pi)\}.
$$
\qed

\section{Gaussian Bootstrap for Investment Allocation Problem}\label{sec:bootstrap conic root}

The Gaussian bootstrap is a resampling technique widely used in statistical inference to approximate the distribution of a statistic. A critical step in this procedure involves computing the statistic
\[
\bar{Z}^* = \max_{\pi \in \Pi} Z^*_{\pi},
\]
where
\[
Z^*_{\pi} = \frac{\pi^\top Z^*}{\sqrt{\pi^\top \widehat{\Omega}\,\pi}},
\]
and
\[
Z^* = \widehat{\Omega}^{1/2}\,N^*(0,I).
\]
Here, $N^*(0,I)$ denotes a standard Gaussian vector in $\mathbb{R}^d$, $\widehat{\Omega}$ is a positive-definite covariance estimator, and $\Pi \subset \mathbb{R}^d$ is a convex, bounded subset of the probability simplex.

At first glance, the computation of $\bar{Z}^*$ appears to require solving a nonconcave fractional optimization problem:
\[
\max_{\pi \in \Pi} \frac{\pi^\top Z^*}{\sqrt{\pi^\top \widehat{\Omega}\,\pi}},
\]
which can be computationally prohibitive. However, by reformulating the problem as a root-finding task involving second-order cone programs, we can achieve computational efficiency.

\subsection*{Reformulation as a Root-Finding Problem}

Instead of directly tackling the fractional optimization, we consider the following equivalence:
\[
\bar{Z}^* = \mathrm{root}_{t \ge 0} \left\{ \max_{\pi \in \Pi} \left( \pi^\top Z^* - t\,\sqrt{\pi^\top \widehat{\Omega}\,\pi} \right) = 0 \right\}.
\]
This reformulation allows us to solve for $\bar{Z}^*$ by identifying the unique value of $t$ that zeroes the function
\[
f(t) = \max_{\pi \in \Pi} \left( \pi^\top z - t\,\sqrt{\pi^\top \widehat{\Omega}\,\pi} \right)
\]
when $Z^* = z$. 
Lemma \ref{sec:lem conic}, provided in the online supplement, formalizes the properties of $f(t)$ necessary for this approach.

\newpage
\section{Exhibits}
\begin{figure}[h]
    \centering
\begin{subfigure}{0.5\linewidth}
        \centering
         \caption{Programs targeting age $>$ 25}
        \includegraphics[width=\linewidth]{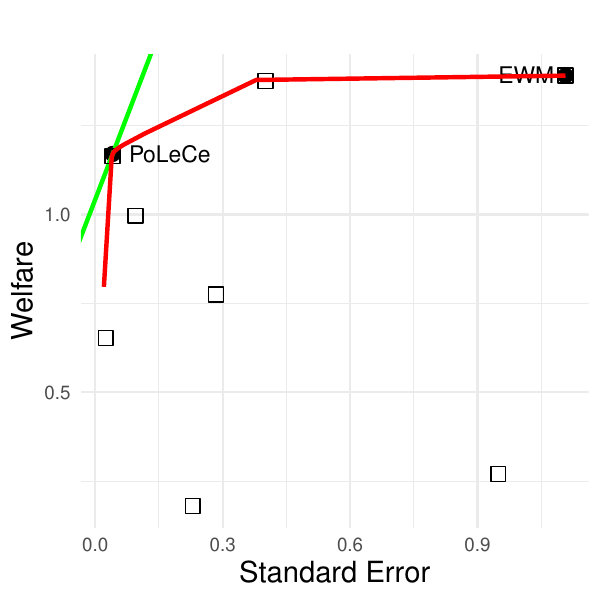}
       
    \end{subfigure}%
    \hfill
    \begin{subfigure}{0.5\linewidth}
        \centering
        \caption{Programs targeting age $\leq$ 25} 
        \includegraphics[width=\linewidth]{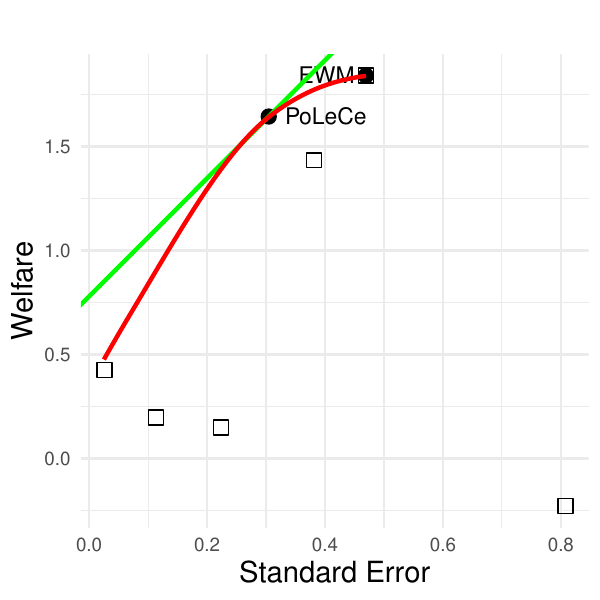}
        
    \end{subfigure}
\caption{\label{fig:frontier_MVPF} Efficient Decision Frontier.}
\begin{minipage}{1\textwidth} %
{\small \emph{Notes}: Squares represent MVPF estimates and their standard errors for 14 programs collected by the Policy Impacts Library. The red curve shows the decision frontier for budget allocation across these programs. The solid square and circle mark allocations selected by EWM and PoLeCe, respectively. For programs targeting age $>$ 25, the programs included in the PoLeCe allocation, listed in order of their shares, are:  OHIE, Single Adults: 1.16 (0.04), Job Training Partnership Act, Adults: 1.38 (0.40), where each value represents the point estimate (standard error). The program selected by EWM is Holistic Wrap-around Services Can Improve Employment Rates: 1.39 (1.11).  For programs targeting age $\leq$ 25, the programs included in the PoLeCe allocation are:  Head Start: 1.84 (0.47) and Wisconsin Scholars Grant: 1.43 (0.38). The program selected by EWM is Head Start: 1.84 (0.47). 
\par}
\end{minipage}
\end{figure}
\newpage
\begin{table}[htbp]{\small\caption{Results for Investment Allocations to Adult Programs (Age $>$ 25)}
\label{tab:example mvpf adult}
\begin{center}
\begin{tabular}{lrr}
\hline\hline
 & EWM & PoLeCe \\
 \hline
 \multicolumn{3}{l}{\textbf{Panel A. Outcome: MVPF}} \\
Estimated Value: $\widehat{V} (\widehat{\pi})$ & 1.39  &  1.17   \\  
LCB: $ \widehat{V}(\widehat{\pi})-\widehat{q}_{0.95, \Pi}\,\widehat{s}(\widehat{\pi})$ & -2.06 & 1.04 \\
\hline
 \multicolumn{3}{l}{\textbf{Panel B. Selected Policy}} \\
Holistic Wrap-around Services Can Improve Employment Rates: 1.39 (1.11) & 1 & 0.00 \\ 
Padua Pilot: 0.18 (0.23) & 0 & 0.00 \\ 
Housing Vouchers in Chicago: 0.65 (0.02) & 0 & 0.00 \\ 
Oregon Health Insurance Experiment (Provided to Single Adults): 1.16 (0.04) & 0 & 0.97 \\ 
Income Maintenance Experiment in Seattle and Denver: 0.27 (0.95) & 0 & 0.00 \\ 
Paycheck Plus: EITC to Adults without Dependents: 1.00 (0.09) & 0 & 0.00  \\ 
Work Advance: 0.78 (0.28) & 0 & 0.00 \\ 
Job Training Partnership Act, Adults: 1.38 (0.40) & 0 & 0.03 \\  
\hline
\\
\end{tabular}
\end{center}
\begin{minipage}{1\textwidth}
\small
\textit{Note:} EWM denotes empirical welfare maximization, and PoLeCe denotes policy learning with confidence. In Panel A, we report the EWM and the PoLeCe lower confidence bound with $\alpha = 0.05$. In Panel B, we report the selected allocation for each program.  MVPF estimates and associated standard errors (in parentheses) are included  alongside each program name.
\end{minipage}
}
\end{table}

\begin{table}[htbp]
{\small
\caption{Results for Investment Allocations to Youth Programs (Age $\leq 25$)}
\label{tab:example mvpf youth}
\begin{center}
\begin{tabular}{lrr}
\hline\hline
 & EWM & PoLeCe \\
 \hline
 \multicolumn{3}{l}{\textbf{Panel A. Outcome: MVPF}} \\
Estimated Value: $\widehat{V} (\widehat{\pi})$ & 1.84  &  1.64   \\  
LCB: $\widehat{V}(\widehat{\pi})-\widehat{q}_{0.95, \Pi} \widehat{s}(\widehat{\pi})$ & 0.51 & 0.78 \\
\hline
 \multicolumn{3}{l}{\textbf{Panel B. Selected Policy}} \\
Head Start Impact Study: 1.84 (0.47)  & 1 & 0.52 \\
Wisconsin Scholar Grant to Low-Income College Students: 1.43 (0.38)  & 0 & 0.48 \\
JobStart: 0.20 (0.11)   & 0 & 0 \\
Year Up: 0.43 (0.03)  & 0 & 0 \\
Job Corps: 0.15 (0.22)   & 0 & 0 \\
Job Training Partnership Act, Youth: -0.23 (0.81)  & 0 & 0\\
\hline\hline
\\
\end{tabular}
\end{center}
\begin{minipage}{1\textwidth}
\small
\textit{Note:} EWM denotes empirical welfare maximization, and PoLeCe denotes policy learning with confidence. In Panel A, we report the EWM and the PoLeCe lower confidence bound with $\alpha = 0.05$. In Panel B, we report the selected allocation for each program. MVPF estimates and associated standard errors (in parentheses) are included  alongside each program name.
\end{minipage}
}
\end{table}

\newpage
\setcounter{page}{1}

\begin{LARGE}
\begin{center}
Online Appendix to ``Policy Learning with Confidence''
\end{center}
\end{LARGE}

\begin{center}
Victor Chernozhukov, 
Sokbae Lee, Adam M. Rosen
and
Liyang Sun
\end{center}

\section{Details for Gaussian Bootstrap Implementation}\label{sec: Gaussian Bootstrap Details}

The following Lemma provides formalities for implementing the Gaussian bootstrap procedure described in Appendix \ref{sec:bootstrap conic root}.
\begin{lemma}[Existence, Uniqueness, and Differentiability of the Root]\label{sec:lem conic}
    Let $\Pi \subset \mathbb{R}^d$ be a convex, closed subset of the probability simplex, and let $\widehat{\Omega}$ be a positive-definite matrix. Define
    \[
    f(t) = \max_{\pi \in \Pi} \left( \pi^\top z - t\,\sqrt{\pi^\top \widehat{\Omega}\,\pi} \right) \quad \text{for } t \ge 0.
    \]
    Then:
    \begin{enumerate}
        \item For each fixed $t \ge 0$, the optimization problem
        \[
        \max_{\pi \in \Pi} \left( \pi^\top z - t\,\sqrt{\pi^\top \widehat{\Omega}\,\pi} \right)
        \]
        can be reformulated as a concave second-order conic optimization problem.
        
        \item There exists a unique $t^* \ge 0$ such that $f(t^*) = 0$.
        
        \item The derivative of $f$ at $t$ is given by
        \[
        f'(t) = -\,\sqrt{\,\pi^*(t)^\top \widehat{\Omega}\,\pi^*(t)\,},
        \]
        where $\pi^*(t) \in \arg\max_{\pi \in \Pi} \left( \pi^\top z - t\,\sqrt{\pi^\top \widehat{\Omega}\,\pi} \right)$. Consequently, $f(t)$ is strictly decreasing.
    \end{enumerate}
\end{lemma}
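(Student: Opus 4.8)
The plan is to treat the three claims in order using convex-analytic tools. For claim (1), I would first observe that $\sqrt{\pi^\top\widehat\Omega\,\pi}=\|\widehat\Omega^{1/2}\pi\|_2$ is a Euclidean norm in $\pi$ because $\widehat\Omega$ is positive definite, hence convex; since $t\ge 0$, the objective $\pi^\top z - t\sqrt{\pi^\top\widehat\Omega\,\pi}$ is a concave function of $\pi$, and $\Pi$ is a polytope cut out by the linear simplex and box constraints. To cast this as a second-order cone program I would introduce an epigraph variable $u$ and rewrite the problem as $\max_{\pi\in\Pi,\,u}\{\pi^\top z - t\,u\}$ subject to the single second-order cone constraint $\|\widehat\Omega^{1/2}\pi\|_2\le u$; since $t\ge 0$ this constraint binds at any optimum, so the reformulation is exact and is manifestly a concave SOCP.

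For claim (2), I would record the structural properties of $f$ as a function of $t$. Each map $t\mapsto \pi^\top z - t\sqrt{\pi^\top\widehat\Omega\,\pi}$ is affine and non-increasing, so $f$, being their pointwise supremum over the compact set $\Pi$, is convex, non-increasing, and (by Berge's maximum theorem) continuous. Because $\Pi$ lies in the probability simplex it excludes the origin, so compactness together with positive definiteness gives $\underline c:=\min_{\pi\in\Pi}\sqrt{\pi^\top\widehat\Omega\,\pi}>0$; hence $f(t)\le \max_{\pi\in\Pi}\pi^\top z - \underline c\,t\to-\infty$ as $t\to\infty$. Noting $f(0)=\max_{\pi\in\Pi}\pi^\top z\ge 0$ in the relevant case (equivalently $\bar Z^*\ge 0$), the intermediate value theorem yields a root, and the strict monotonicity from claim (3) makes it unique.

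For claim (3), I would invoke Danskin's (envelope) theorem: since $g(\pi,t):=\pi^\top z - t\sqrt{\pi^\top\widehat\Omega\,\pi}$ is jointly continuous, differentiable in $t$ with $\partial_t g(\pi,t)=-\sqrt{\pi^\top\widehat\Omega\,\pi}$, and $\Pi$ is compact, the two-sided derivative $f'(t)$ exists and equals $\partial_t g(\pi^*(t),t)$ provided the value $\sqrt{\pi^\top\widehat\Omega\,\pi}$ is constant across maximizers. The main obstacle, and the step I would spend the most care on, is that the penalty is only a norm and hence not strictly concave, so a priori the maximizer need not be unique. I would resolve this by showing the maximizer is in fact unique for every $t>0$: if $\pi_1\ne\pi_2$ both attained the maximum, concavity would force equality in $\|\widehat\Omega^{1/2}(\lambda\pi_1+(1-\lambda)\pi_2)\|_2=\lambda\|\widehat\Omega^{1/2}\pi_1\|_2+(1-\lambda)\|\widehat\Omega^{1/2}\pi_2\|_2$, which by strict convexity of the Euclidean norm requires $\widehat\Omega^{1/2}\pi_1$ and $\widehat\Omega^{1/2}\pi_2$ to be positive multiples of one another, i.e.\ $\pi_1=c\,\pi_2$ with $c>0$; the simplex constraint $\sum_i(\pi_1)_i=\sum_i(\pi_2)_i=1$ then forces $c=1$, a contradiction. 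With uniqueness in hand the formula $f'(t)=-\sqrt{\pi^*(t)^\top\widehat\Omega\,\pi^*(t)}$ follows, and since $\pi^*(t)^\top\widehat\Omega\,\pi^*(t)\ge\underline c^{\,2}>0$ we obtain $f'(t)<0$; integrating gives $f(t_1)-f(t_2)\ge\underline c\,(t_2-t_1)>0$ for $t_1<t_2$, so $f$ is strictly decreasing and the root in claim (2) is indeed unique.
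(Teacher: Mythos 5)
Your proof is correct and follows the same overall route as the paper's: the epigraph/auxiliary-variable reformulation into a second-order cone program for claim (1), the intermediate value theorem plus strict monotonicity for claim (2), and an envelope-theorem argument for claim (3). However, you patch two genuine holes that the paper's proof leaves open, and this is worth recording. First, for existence of the root you correctly note that the IVT argument needs $f(0)=\max_{\pi\in\Pi}\pi^\top z\ge 0$ (equivalently $\bar Z^*\ge 0$); the paper only observes that $f(0)$ is \emph{finite} and that $f(t)\to-\infty$, which does not suffice --- indeed, for a bootstrap draw $z$ with all components negative, $f(0)<0$ and no nonnegative root exists, so the lemma implicitly carries this sign condition that you alone make explicit. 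Second, for the derivative formula the paper cites the envelope theorem without comment, but Danskin's theorem only yields a two-sided derivative when the quantity $\sqrt{\pi^{*\top}\widehat\Omega\,\pi^*}$ is constant over the set of maximizers; you close this gap by proving the maximizer is unique for every $t>0$, via equality in the triangle inequality for $\|\widehat\Omega^{1/2}(\cdot)\|_2$ forcing $\pi_1=c\,\pi_2$ with $c>0$, and then $c=1$ from the simplex normalization $\sum_i(\pi_1)_i=\sum_i(\pi_2)_i=1$. Your direct bound $f(t_2)\le f(t_1)-\underline{c}\,(t_2-t_1)$ with $\underline{c}=\min_{\pi\in\Pi}\sqrt{\pi^\top\widehat\Omega\,\pi}>0$ also gives strict decrease without relying on differentiability at all, which is a cleaner path to uniqueness of the root. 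In short: same skeleton, but your version is airtight where the paper's is heuristic, at the cost of a somewhat longer argument.
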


\subsection*{Implications for Gaussian Bootstrap}
The reformulation of computing $\bar{Z}^*$ as a root-finding problem over a family of second-order cone programs significantly enhances computational efficiency. Each evaluation of $f(t)$ involves solving a concave SOC optimization problem, which is well-supported by modern optimization solvers. The uniqueness of the root $t^*$ guarantees that iterative root-finding algorithms, such as the Newton-Raphson method or bisection, will converge reliably to the desired solution. Moreover, the explicit expression for the derivative $f'(t)$ facilitates the use of derivative-based optimization methods, further streamlining the computational process.

\begin{proof}[Proof of Lemma~\ref{sec:lem conic}]
    \textbf{1. Second-Order Conic Reformulation.}  
    Fix any $t \ge 0$. Consider
    \[
    f(t) = \max_{\pi \in \Pi} \left( \pi^\top z - t\,\sqrt{\pi^\top \widehat{\Omega}\,\pi} \right).
    \]
    Introduce an auxiliary variable $s$ such that
    \[
    s = -\,\sqrt{\pi^\top \widehat{\Omega}\,\pi} \quad \Leftrightarrow \quad \|\widehat{\Omega}^{1/2}\,\pi\|_2 \le -s, \quad s \le 0.
    \]
    The optimization problem becomes
    \[
    \begin{aligned}
    \max_{\pi, s} \quad & \pi^\top z + t\,s \\
    \text{subject to} \quad & \|\widehat{\Omega}^{1/2}\,\pi\|_2 \le -s, \\
    & s \le 0, \\
    & \pi \in \Pi.
    \end{aligned}
    \]
    The constraint $\|\widehat{\Omega}^{1/2}\,\pi\|_2 \le -s$ is a standard second-order cone constraint. Therefore, the problem is a concave second-order conic program.

    \textbf{2. Existence and Uniqueness of the Root.}  
    \textbf{Existence.}  
    Observe that
    \[
    f(0) = \max_{\pi \in \Pi} \pi^\top z,
    \]
    which is finite since $\Pi$ is bounded and closed. As $t \to \infty$, the term $-t\,\sqrt{\pi^\top \widehat{\Omega}\,\pi}$ dominates for every $\pi \in \Pi$, leading to $f(t) \to -\infty$. Since $f(t)$ is continuous (as the pointwise maximum of continuous functions over a compact set), the Intermediate Value Theorem guarantees the existence of some $t^* \ge 0$ such that $f(t^*) = 0$.

    \textbf{Uniqueness.}  
    From part 3, we have
    \[
    f'(t) = -\,\sqrt{\pi^*(t)^\top \widehat{\Omega}\,\pi^*(t)} < 0,
    \]
    which implies that $f(t)$ is strictly decreasing. A strictly decreasing continuous function can cross zero at most once. Therefore, the root $t^*$ satisfying $f(t^*) = 0$ is unique.

    \textbf{3. Derivative of $f(t)$.}  
    Define
    \[
    F(\pi, t) = \pi^\top z - t\,\sqrt{\pi^\top \widehat{\Omega}\,\pi}.
    \]
    Then
    \[
    f(t) = \max_{\pi \in \Pi} F(\pi, t).
    \]
    By the envelope theorem, the derivative of $f(t)$ with respect to $t$ is
    \[
    f'(t) = \frac{\partial}{\partial t} F(\pi^*(t), t) = -\,\sqrt{\pi^*(t)^\top \widehat{\Omega}\,\pi^*(t)}.
    \]
    Since $\widehat{\Omega}$ is positive-definite and $\pi^*(t) \in \Pi$ with $\Pi$ bounded, it follows that $\pi^*(t)^\top \widehat{\Omega}\,\pi^*(t) > 0$. Therefore, $f'(t) < 0$, establishing that $f(t)$ is strictly decreasing.
\end{proof}

\section{Alternative Motivations for Risk-Aware Decisions}\label{sec: RW Decision}

We consider a decision maker (DM) who holds posterior beliefs \(B\) about policy welfare \(\{V(\pi)\}\) given by
\begin{equation}\label{eq:belief}
V(\pi) \;\sim\; \widehat{V}(\pi) \;+\; \widehat{s}(\pi)\,Z_\pi,
\end{equation}
where \(\{Z_\pi\}\) is a centered Gaussian vector with unit variance \(\bigl(\mathrm{E}_B[Z_\pi^2] = 1\bigr)\).\footnote{This is formally justified by the approximate Bayesian framework of \cite{doksumlo1990}, in which the DM treats empirical estimates as the input data and uses approximate likelihoods of these estimators to form beliefs about parameters of interest.}  
Note that \(\widehat{V}(\pi)\) and \(\widehat{s}(\pi)\) are fixed parameters under the DM's posterior \(B\). They may arise from empirical estimates, incorporate prior information, or reflect other model uncertainties.%

The following result shows that this framework encompasses all risk-aware, rational policy choices of a DM whose risk preferences follow the von Neumann-Morgenstern expected utility model. Specifically, suppose the DM has a strictly increasing and concave utility function \(U\) and aims to maximize
\[
\max_{\pi \in \Pi}\,\mathrm{E}_B\bigl[ U\bigl(V(\pi)\bigr)\bigr].
\]

\begin{prop}[Posterior Expected Utility Maximization]\label{prop:DM1}
Consider the DM described above. Assume \(\widehat{V}(\pi)\) and \(\widehat{s}(\pi)\) are continuous in \(\pi\), and \(\Pi\) is compact.  
\begin{enumerate}
\item If \(U\) is strictly concave, then the DM's choice can be implemented by solving 
\begin{equation}\label{eq:utilityDM}
\max_{\pi \in \Pi}\,\Bigl\{\widehat{V}(\pi)\;-\;k_{UB}\,\widehat{s}(\pi)\Bigr\}
\end{equation}
for some constant \(k_{UB}\) that depends on \(U\) and \(B\).
\item If \(U\) is linear (the DM is risk-neutral), then the DM solves the above program with \(k_{UB} = 0\). Furthermore, any limit point of the solutions to \eqref{eq:utilityDM} as \(k_{UB} \searrow 0\) lies within the solution set for \(k_{UB} = 0\).
\end{enumerate}
\end{prop}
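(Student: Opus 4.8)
The plan is to reduce the DM's expected-utility problem to a two-dimensional mean--standard-deviation problem and then to exploit concavity through a supporting-hyperplane (tangency) argument.

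First I would observe that under the belief $B$ in \eqref{eq:belief}, for each fixed $\pi$ the welfare $V(\pi)$ is Gaussian with mean $\widehat{V}(\pi)$ and standard deviation $\widehat{s}(\pi)$, so the objective factors through these two moments:
\[
\mathrm{E}_B[U(V(\pi))] = g\bigl(\widehat{V}(\pi),\widehat{s}(\pi)\bigr), \qquad g(m,\sigma) := \mathrm{E}[U(m+\sigma Z)], \quad Z\sim N(0,1).
\]
I would then record three properties of $g$. First, $g$ is concave on $\mathbb{R}\times\mathbb{R}_{\ge 0}$: for each realization $z$, $(m,\sigma)\mapsto U(m+\sigma z)$ is an affine map composed with the concave $U$, hence concave, and expectation preserves concavity. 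Second, $g$ is strictly increasing in $m$, since $g_m=\mathrm{E}[U'(m+\sigma Z)]>0$. Third, $g$ is nonincreasing in $\sigma$, strictly so when $U$ is strictly concave and $\sigma>0$: writing $g_\sigma=\mathrm{E}[U'(m+\sigma Z)\,Z]=\mathrm{Cov}(U'(m+\sigma Z),Z)\le 0$, which follows from the covariance/association inequality since $U'$ is nonincreasing (equivalently, from Stein's lemma, $g_\sigma=\sigma\,\mathrm{E}[U''(m+\sigma Z)]\le 0$). Continuity of $\widehat{V},\widehat{s}$ and compactness of $\Pi$ give an EU-maximizer $\pi^*$; write $(m^*,\sigma^*)=(\widehat{V}(\pi^*),\widehat{s}(\pi^*))$.

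Next I would define the candidate penalty as the marginal rate of substitution at the optimum,
\[
k_{UB} := -\,\frac{g_\sigma(m^*,\sigma^*)}{g_m(m^*,\sigma^*)}\ \ge\ 0,
\]
and argue that $\pi^*$ solves $\max_{\pi\in\Pi}\{\widehat{V}(\pi)-k_{UB}\,\widehat{s}(\pi)\}$. The heart of the argument is a supporting-hyperplane step. Let $S=\{(\widehat{V}(\pi),\widehat{s}(\pi)):\pi\in\Pi\}$ and $C=\operatorname{conv}(S)$. Because in the budget problem $\widehat{V}$ is affine and $\widehat{s}$ is convex on the convex set $\Pi$, any convex combination of points of $S$ is weakly dominated in the $\sigma$-coordinate by a feasible point with the same mean; since $g$ is nonincreasing in $\sigma$, this gives $\max_{C}g=\max_{S}g$, attained at $(m^*,\sigma^*)\in S$. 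As $g$ is concave and $C$ convex, first-order optimality over $C$ yields the variational inequality $\nabla g(m^*,\sigma^*)\cdot\big((m,\sigma)-(m^*,\sigma^*)\big)\le 0$ for all $(m,\sigma)\in C$. Dividing by $g_m(m^*,\sigma^*)>0$ turns this into $m-k_{UB}\sigma\le m^*-k_{UB}\sigma^*$ on $C\supseteq S$, which is exactly the statement that $\pi^*$ maximizes $\widehat{V}(\pi)-k_{UB}\widehat{s}(\pi)$; if $g$ fails to be differentiable I would replace $\nabla g$ by a supergradient, whose first coordinate is still positive and second nonpositive.

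I expect the main obstacle to be precisely this passage to a convex feasible set, i.e.\ the convexity of the efficient frontier: under only continuity and compactness, $S$ need not have a convex upper-left boundary and a single linear tradeoff need not reproduce the EU-optimal choice. The affine-$\widehat{V}$/convex-$\widehat{s}$ structure of the budget problem is what makes $\max_C g=\max_S g$ hold and is the key structural input; I would state the result for this setting and note that it extends to any configuration whose efficient frontier is convex. Finally, part~(2) is straightforward: when $U$ is linear, $\mathrm{E}_B[U(V(\pi))]$ is an increasing affine function of $\widehat{V}(\pi)$, so the EU-maximizer coincides with the EWM rule, i.e.\ \eqref{eq:utilityDM} with $k_{UB}=0$. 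For the limit-point claim, take $k_j\searrow 0$ with $\pi(k_j)\to\bar\pi$ (possible along a subsequence by compactness) and any EWM optimizer $\pi_0^*$; the optimality inequality $\widehat{V}(\pi(k_j))-k_j\widehat{s}(\pi(k_j))\ge \widehat{V}(\pi_0^*)-k_j\widehat{s}(\pi_0^*)$ together with boundedness of $\widehat{s}$ on $\Pi$ gives, on letting $j\to\infty$, $\widehat{V}(\bar\pi)\ge\max_{\pi\in\Pi}\widehat{V}(\pi)$, so $\bar\pi$ lies in the $k_{UB}=0$ solution set.
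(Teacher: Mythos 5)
Your proof is correct in the setting you restrict it to, and at the decisive step it takes a genuinely different route from the paper's --- one that is, in fact, more careful. The paper proceeds exactly as you do up to the two-moment reduction: it defines $f(v,s)=\mathrm{E}[U(v+sZ)]$, establishes $\partial_v f>0$ and $\partial_s f<0$ by the same association-inequality (equivalently Stein) computation, and concludes that the EU-optimum $\widehat\pi$ solves the constrained problem $\max\{\widehat V(\pi):\widehat s(\pi)\le \widehat s(\widehat\pi)\}$. It then simply asserts that this constrained problem ``can be written in Lagrangian form'' with some multiplier $k>0$, and drops the constant term. The existence of such an exact multiplier is precisely the obstacle you flag: it requires convex structure, which the proposition's stated hypotheses (compactness of $\Pi$, continuity of $\widehat V$ and $\widehat s$) do not supply. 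Indeed, the claim fails in that generality: take $\Pi$ to be three points mapped to $(m,\sigma)$ values $(0,0)$, $(1,1)$, $(2.5,2)$, and $U(v)=1-e^{-\lambda v}$; then EU-maximization reduces to maximizing $m-\lambda\sigma^2/2$, so for $\lambda\in(1,2)$ the middle point is the unique EU-optimum, yet it maximizes $m-k\sigma$ for no $k\ge 0$, since that would require both $k\le 1$ and $k\ge 1.5$. Your supporting-hyperplane/variational-inequality argument on $\mathrm{conv}(S)$ --- using joint concavity of $g$, affineness of $\widehat V$, and convexity of $\widehat s$ and of $\Pi$ to get $\max_{\mathrm{conv}(S)}g=\max_S g$ --- is exactly what closes this step in the budget-allocation setting, and your decision to state the result for that setting (or any configuration with a convex frontier) rather than in the proposition's full stated generality is the right call. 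The paper's Lagrangian assertion implicitly relies on the same structure without saying so, which is the one real weakness of its proof; what your approach buys is making that reliance explicit and supplying the missing argument, at the cost of invoking more machinery than the paper's one-line duality claim.

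On part (2), the two arguments differ only in style: you verify the limit-point claim directly via the optimality inequality along a convergent subsequence, while the paper cites Berge's maximum theorem. Both are valid; yours is self-contained and elementary.
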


Thus, a risk-averse DM must solve a variant of \eqref{eq:utilityDM}. As an example, for exponential utility \(U(v) = 1 - \exp(-\lambda v)\), one obtains
\[
\mathrm{E}[U(V(\pi))] 
\;=\; 
1 
\;-\; 
\exp\Bigl(
    -\lambda\,\widehat{V}(\pi) 
    \;+\;
    \tfrac12\,\lambda^2\,\widehat{s}^2(\pi)
\Bigr).
\]
Maximizing this over \(\pi \in \Pi\) is equivalent to \eqref{eq:utilityDM} for some \(k_{UB} > 0\). In contrast, setting \(k_{UB} = 0\) recovers the risk-neutral (EWM) decision. Notably, as \(k_{UB}\) approaches zero, solutions to the risk-weighted program \eqref{eq:utilityDM} converge to the lowest-risk points among the EWM solutions (if the latter is set-valued).

We can gain another perspective by considering a DM with Gaussian beliefs as described above in \eqref{eq:belief}, but whose risk preferences are more directly shaped by a combination of regret and estimation risk, as suggested by \cite{liese_statistical_2008}.\footnote{The suggestion was made to unify the problems of selection and estimation.}

Consider the loss function
$$
L(V(\pi)) = V_{\max} - V(\pi) + k_R \bigl|\widehat V(\pi) - V(\pi)\bigr| \sqrt{\bm{\pi}/2},
$$
where boldface $\bm{\pi}$ denotes the area of the unit circle. This loss function has regret component
$$
V_{\max} - V(\pi) = \max_{\bar \pi \in \Pi} V(\bar \pi) - V(\pi)
$$
stemming from using the policy rule \(\pi\) rather than the unknown welfare-maximizing choice, and the estimation risk component 
\(\bigl|\widehat V(\pi) - V(\pi)\bigr|\).
Consider a DM who minimizes the expected regret risk-aware loss function:
$$
\min_{\pi \in \Pi}\E_B \bigl[L(V(\pi))\bigr].
$$

\begin{prop}[Best Posterior Regret-Risk Aware Decision]\label{prop:DM2}
Assume \(\widehat{V}(\pi)\) and \(\widehat{s}(\pi)\) are continuous in \(\pi\), and that \(\Pi\) is compact. The DM described above finds the optimal policy by solving:
\[
\max_{\pi \in \Pi}\,\Bigl\{\,\widehat{V}(\pi)\;-\;k_R\,\widehat{s}(\pi)\Bigr\}.
\]
\end{prop}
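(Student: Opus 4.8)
The plan is to compute the posterior expected loss $\E_B[L(V(\pi))]$ in closed form and show that, up to an additive constant that does not depend on the chosen policy $\pi$, it equals $-\{\widehat V(\pi) - k_R\,\widehat s(\pi)\}$; minimizing expected loss is then identical to the stated maximization. First I would expand by linearity of expectation,
\[
\E_B[L(V(\pi))] = \E_B[V_{\max}] - \E_B[V(\pi)] + k_R\sqrt{\bm{\pi}/2}\;\E_B\bigl[\,|\widehat V(\pi) - V(\pi)|\,\bigr].
\]
The leading term $\E_B[V_{\max}] = \E_B[\max_{\bar\pi\in\Pi} V(\bar\pi)]$ does not depend on the decision variable $\pi$, so it is a fixed constant for the purposes of the minimization and can be set aside.

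Next I would evaluate the two remaining terms using the belief model $V(\pi) = \widehat V(\pi) + \widehat s(\pi)\,Z_\pi$ with $Z_\pi$ a centered Gaussian of unit variance. Since $\E_B[Z_\pi]=0$, the second term is simply $\E_B[V(\pi)] = \widehat V(\pi)$. For the third term, note $|\widehat V(\pi) - V(\pi)| = \widehat s(\pi)\,|Z_\pi|$, and the first absolute moment of a standard normal is $\E_B|Z_\pi| = \sqrt{2/\bm{\pi}}$. The normalizing factor $\sqrt{\bm{\pi}/2}$ is chosen precisely so that $\sqrt{\bm{\pi}/2}\cdot\sqrt{2/\bm{\pi}} = 1$, so the entire third term collapses to $k_R\,\widehat s(\pi)$. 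This is the one computation where care is warranted, but it is a routine Gaussian moment identity.

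Combining the pieces gives $\E_B[L(V(\pi))] = \E_B[V_{\max}] - \{\widehat V(\pi) - k_R\,\widehat s(\pi)\}$, and therefore
\[
\min_{\pi\in\Pi}\E_B[L(V(\pi))] = \E_B[V_{\max}] - \max_{\pi\in\Pi}\bigl\{\widehat V(\pi) - k_R\,\widehat s(\pi)\bigr\},
\]
so the minimization and the stated maximization share the same solution set.

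The main obstacle is not the algebra but verifying that $\E_B[V_{\max}]$ is finite and well defined, so that it may legitimately be treated as a finite, $\pi$-independent constant. This is where the hypotheses enter: compactness of $\Pi$ together with continuity of $\widehat V$ and $\widehat s$ bound these functions on $\Pi$, and combined with Gaussian integrability (or a standard concentration bound for the supremum of a Gaussian process) this guarantees that $\max_{\bar\pi\in\Pi}V(\bar\pi)$ is almost surely finite with finite mean. Existence of a maximizer of $\widehat V(\pi) - k_R\,\widehat s(\pi)$ then follows from the Weierstrass extreme value theorem, again using continuity of the objective and compactness of $\Pi$.
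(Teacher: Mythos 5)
Your proposal is correct and follows essentially the same route as the paper's own proof: linearity of expectation, the Gaussian first-absolute-moment identity $\E_B\lvert Z_\pi\rvert = \sqrt{2/\bm{\pi}}$ cancelling the normalizing factor $\sqrt{\bm{\pi}/2}$, and dropping the $\pi$-independent constant $\E_B[V_{\max}]$. Your additional remarks on the finiteness of $\E_B[V_{\max}]$ and existence of a maximizer via Weierstrass are sensible housekeeping that the paper leaves implicit, but they do not change the argument.
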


Propositions \ref{prop:DM1} and \ref{prop:DM2} explain why DMs may prefer a strictly positive \(k\), but the exact value of \(k\) is application-specific. One way to set it is by conducting experiments where DMs respond to hypothetical or incentivized lotteries/choices, then calibrating $k$ based on their observed choices. Another approach is to present a range of efficient decision frontiers, have DMs select their preferred allocations, and reverse-engineer $k$ that aligns with those selections. Generally, there is no single universal way to determine the ``right'' $k$; its choice depends on context and stakeholder preferences. Using a critical value for $k$ as in PoLeCe provides both interpretable regret and reporting guarantees.%

\section{Proofs of Propositions \ref{prop:DM1} and \ref{prop:DM2}}\label{sec: Supplement Proofs}

\subsection*{Proof of Proposition \ref{prop:DM1}}

Claim (1). Let \(v = \widehat{V}(\pi)\) and \(s = \widehat{s}(\pi)\). Then 
\[
\mathrm{E}_B\bigl[U\bigl(V(\pi)\bigr)\bigr] 
\;=\; 
\mathrm{E}\bigl[U(v + sZ)\bigr],
\quad 
Z \sim N(0,1).
\]
Define
$
f(v,s) :=\mathrm{E}\bigl[U(v + sZ)\bigr].
$
The function \((v,s)\mapsto f(v,s)\) is \emph{increasing} in \(v\) and \emph{decreasing} in \(s\). Indeed,
\[
\frac{\partial}{\partial v} f(v,s) \;=\; \mathrm{E}[U'(v+sZ)] \;>\; 0,
\quad
\frac{\partial}{\partial s} f(v,s)
\;=\;
\mathrm{E}\bigl[U'(v + sZ)\,Z\bigr]
\;<\;0,
\]
because \(U'(\cdot)>0\) and \(-U'(v + sZ)\) is comonotonic with \(Z\) (given that $U''$ exists by Alexandrov's theorem  almost everywhere and $U''<0$ by strict concavity).  The second inequality follows from Chebyshev's association inequality, and the first holds because the expectation of a strictly positive random variable is positive.

Let \(\bigl(\widehat{V}(\widehat{\pi}), \widehat{s}(\widehat{\pi})\bigr)\) correspond to a choice $\widehat \pi$ that maximizes \(\mathrm{E}_B[U(V(\pi))]\) over $\pi \in \Pi$. By the monotonicity properties, we have
\[
\widehat{V}(\widehat{\pi}) 
\; = \; 
\max_{\pi \in \Pi} \,\Bigl\{\widehat{V}(\pi): \widehat{s}(\pi) \;\le\; \widehat{s}(\widehat{\pi})\Bigr\}.
\]
This maximization can be written in a Lagrangian form:
\[
\max_{\pi\in\Pi}\Bigl\{\widehat{V}(\pi) - k\,(\widehat{s}(\pi)- \widehat s(\widehat \pi))\Bigr\},
\]
where \(k>0\) is the Lagrange multiplier, which depends  on $U$ and the data $\{\widehat V(\pi), \widehat s(\pi)\}$ defining the beliefs $B$.  We can then remove 
$k \widehat s(\widehat \pi)$ from the program without affecting its solution.

Claim (2). The first assertion of the claim for $k_{UB}=0$ follows from $\mathrm{E}_B [V(\pi)] = \hat V(\pi)$.  The second assertion of the claim follows from Berge's maximum theorem.
\qed

\subsection*{Proof of Proposition \ref{prop:DM2}.} Observing that
\[
\E_B \bigl[V_{\max} - V(\pi)\bigr] = \E_B \bigl[V_{\max}\bigr] - \widehat{V}(\pi),
\]
and
\[
\E_B \bigl|\widehat{V}(\pi) - V(\pi)\bigr| = \widehat{s}(\pi)\sqrt{\frac{2}{\boldsymbol{\pi}}},
\]
the program becomes
\[
\min_{\pi \in \Pi} \Bigl\{ \E_B \bigl[V_{\max}\bigr] - \widehat{V}(\pi) + k_R \,\widehat{s}(\pi)\Bigr\}.
\]
We can now omit the term \(\E_B \bigl[V_{\max}\bigr]\), resulting in the expression stated in the claim, without affecting the solution of the program. \(\qed\)

\medskip
\section{Efficient Decision Frontier Computation}\label{sec:frontier algorithm}

In this appendix, we provide pseudocode of the algorithm to compute the efficient decision frontier. 
The code below builds a monotonically nondecreasing, concave frontier from the input of points. 

\begin{algorithm}
\caption{Build a Monotonically Nondecreasing, Concave Frontier}
\label{alg:build_frontier}
\begin{algorithmic}[1]

\Require $points = (Risk, Value)$ such that $Risk$ and $Value$ have the same length
\Ensure The set of points that form the efficient frontier

\Procedure{FrontierEstimation}{$points$}

    \State Create data structure from input
    \State Sort by ascending $Risk$, tie-break by descending $Value$
    \State $frontier \gets$ empty list

    \For{\(i \gets 1\) to the number of rows of $points$}
        \State $p \gets \text{row $i$ of $points$}$ 
        \Comment{extract row \(i\) from $points$}
         
        \If{$frontier$ is not empty \textbf{and} $p.Value < frontier[\text{last}].Value$}
            \State \textbf{continue} 
            \Comment{skip to next iteration; ensure monotonicity}
        \EndIf
        
        \While{there are at least two elements in $frontier$}
            \State $p2 \gets \text{last frontier point}$
            \State $p1 \gets \text{second to last frontier point}$
            
            \If{$\displaystyle \frac{p.Value - p1.Value}{p.Risk - p1.Risk} 
                  > 
                  \frac{p2.Value - p1.Value}{p2.Risk - p1.Risk}$}
                \State remove $p2$ from $frontier$ 
                \Comment{ensure concavity}
            \Else
                \State \textbf{break}
            \EndIf
        \EndWhile
        
        \State append $p$ to $frontier$
    \EndFor
    
    \State \textbf{return} $frontier$ 
    \Comment{the set of frontier points}

\EndProcedure

\end{algorithmic}
\end{algorithm}

Note that the algorithm loops over sorted points, skipping any new point with a lower Value than the frontier's last point, thereby enforcing a monotonically nondecreasing frontier. It also ensures the concavity of the frontier by removing a ``bulge,'' thus guaranteeing that the slopes decrease as we move right. Here, the bulge is checked by whether the following holds:
\[
\frac{p.Value - p1.Value}{p.Risk - p1.Risk} 
\;>\;
\frac{p2.Value - p1.Value}{p2.Risk - p1.Risk}.
\]
That is, each new point to the frontier must produce a slope that is less than or equal to the previous segment's slope.

For the investment allocation problem, it may be infeasible to discretize $\Pi$ and apply the above algorithm to plot the efficiency frontier. Instead, we start with a fine grid for $k$ centered at $\hat{q}_{0.95,\Pi},$ and directly trace out the efficiency decision frontier by solving the corresponding  optimization problem as in ~\eqref{eq:polece}. Appendix~\ref{sec:bootstrap conic root} provides a computationally efficient way to solve such optimization and therefore to plot the efficiency frontier.

\section{Treatment Choice}\label{sec:example treatment policies}
This appendix considers the problem of selecting treatment choice policies. Section \ref{sec: problem of choosing better treatment policies} describes the treatment choice problem, and Section \ref{sec: RCT applications} then demonstrates application of PoLeCe in three empirical settings using data from RCTs.

\subsection{Problem of Choosing Better Treatment Policies.}\label{sec: problem of choosing better treatment policies} 
Given a finite treatment set \(\mathcal{T} \) %
and potential outcomes \(Y(t)\) for each treatment \(t\), a policy \(\pi \in \Pi \) specifies probabilities \(\pi(t \mid X)\) of assigning treatment \(t\) given covariates \(X\), where \(\Pi \) denotes the set of allowable policies. For example, if any fractional allocation is allowed \( \Pi \) is the unit simplex, while if treatment assignment is required to be a deterministic function of covariates then \(\Pi\) is the set of vertices of the unit simplex. Other constraints can also be accommodated by appropriate specification of \(\Pi \). 

The welfare function of a policy is then defined as
\[
  V(\pi) 
  \;=\; 
  \mathrm{E}\!\Bigl[\sum_{t \in \mathcal{T}} Y(t)\,\pi\bigl(t\mid X\bigr)\Bigr],
\]
which captures the expected outcome under policy \(\pi\).

We assume we have data \((W_i)_{i=1}^n\) in the form of i.i.d.\ copies of \(W=(Y,T,X)\), where \(T\) is the treatment, \(X\) are covariates, and \(Y\) is the outcome. We also assume the data were collected under the unconfoundedness assumption, namely that the assigned treatment \(T\) is independent of the potential outcome \(Y(t)\) given \(X\). This implies that 
\(\mathrm{E}[Y(t)\mid X]\) is identified from the regression 
\[ 
g(t,X) = \mathrm{E}[Y \mid T=t,X], 
\] 
provided that \(p(t \mid X) = \Pr[T=t \mid X] > 0\).

Let \(p(t \mid X) = \Pr[T=t \mid X]\) be the propensity score, and define the Riesz representer
\[
  H(\pi) 
  \;=\; 
  \sum_{t \in \mathcal{T}}\mathbf{1}\{T = t\}\,\frac{\pi(t \mid X)}{p(t \mid X)}.
\]
Under standard conditions, including \(\mathrm{E}[H^2(\pi)]<\infty\) -- a relevant overlap condition\footnote{This requires that policies do not place weight on treatments for which there is insufficient randomization at particular values of \(X\). Note that the standard overlap condition can fail, but the welfare of a particular policy is identified, for example, if \(H(\pi)\) and \(Y\) have a finite second moment.} -- we obtain the following regression and representer formulas for identifying the policy's welfare:
\begin{equation}\label{eq:dual}
  V(\pi) 
  \;=\; 
  \mathrm{E}\biggl[\sum_{t \in \mathcal{T}}g(t,X) \,\pi(t\mid X)\biggr]
   \;=\;
  \mathrm{E}\!\bigl[Y\,H(\pi)\bigr],
\end{equation}
which can be viewed either as aggregating up regressions or as reweighting observed outcomes by how the policy \(\pi\) differs from the propensity score \(p\). Combining these two approaches leads to efficient representation and estimation of the welfare function via the influence function 
\citep[e.g.,][]{Newey94}:
\begin{equation*}
V(\pi) \;=\; \mathrm{E}[\psi_\pi(W)],
\quad \text{where} \quad
\psi_\pi(W) 
  \;:=\; 
  \sum_{t \in \mathcal{T}}g(t,X)\,\pi(t \mid X) 
  \;+\; 
  H(\pi)\,\bigl[Y - g(T,X)\bigr],
\end{equation*}
also known as the doubly-robust score in a parametric estimation context.

\noindent\textbf{Added Value of Policies vs. Control Policy.} It is often the case that we want to make decisions depending upon whether policies improve upon a control treatment. Let us denote by $t=0$ the control state. In this case, we can re-define:
\begin{equation}\label{eq:welfate-additive}
  V(\pi) 
  \;=\; 
  \mathrm{E}\!\Bigl[\sum_{t \in \mathcal{T}} \{Y(t) -Y(0)\} \,\pi\bigl(t\mid X\bigr)\Bigr].
\end{equation}
The efficient score for estimation and inference for $V(\pi)$ is now
\begin{equation}
    \label{eq:newey}
    \psi_\pi(W) := \sum_{t \in \mathcal{T}}( g(t,X)-g(0,X))\pi(t \mid X) +  H(\pi) \,\bigl[Y - g(T,X)\bigr] 
\end{equation}
where we re-define  
$H(\pi) = \sum_{t \in \mathcal{T}} \ \left [\frac{\mathbf{1}\{T = t\}}{p(t \mid X)}-\frac{\mathbf{1}\{T = 0\}}{p(0 \mid X)}\right] \pi(t \mid X)$.

We follow the convention of encapsulating in $X$ both covariates that are confounders as well as those for which the planner may assign different treatments.

In randomized experiments, the representers are known, and we only need consistent cross-fitted learners for the regression functions to obtain consistent, asymptotically normal, efficient estimators \(\widehat{V}(\pi)\).\footnote{With randomized experimental data, the pseudo-consistency of a regression estimator for some function (though not necessarily the true function) is sufficient to satisfy all desirable properties except for asymptotic efficiency.} %
In observational studies, one can employ cross-fitted, modern statistical or machine learning methods to estimate the Riesz representers and the regressions. The resulting estimators satisfy 
\[
 \widehat{V}(\pi) = V(\pi)  + \mathbb{E}_n \bigl(\psi_\pi(W_i)- V(\pi)\bigr) + o_p\bigl(1/\sqrt{n}\bigr),
\]
uniformly over \(\pi \in \Pi\), where \(\Pi\) is either a finite set or a set whose complexity does not grow too fast. By further applying high-dimensional central limit theorems, we obtain
\[
\{\widehat{V}(\pi)\} \;\overset{a}{\sim}\; \{V(\pi) + s(\pi)\, Z_\pi\}, 
\quad Z_\pi \sim N(0,1), 
\quad s^2(\pi) = \mathrm{Var}(\psi_\pi)/n,
\]
where \(\{Z_\pi\}_{\pi \in \Pi}\) is a centered Gaussian process, with each component having standard normal distribution;  and \(\overset{a}{\sim}\) means ``approximately distributed as'' in the sense of Condition (G) in Section \ref{Sec: Reg Guarantees of RW policies}.

\subsection{Empirical Applications to Treatment Choice}\label{sec: RCT applications}
In this section we demonstrate application of PoLeCe to the problem of choosing better treatment policies. We use RCT data from \cite{banerjee2021selecting}, \cite{DupasRobinson2013}, and \cite{Schilbach2019} to illustrate the usefulness of our method and to show how the results can be incorporated into research papers that analyze RCTs. These papers focus on estimating the added value relative to control (as defined in \eqref{eq:welfate-additive}). In contrast, we focus on each policy's value to fully account for the estimation error for estimating the average outcome under control. While these papers originally used regression adjustment, we rely on efficient influence functions (doubly robust scores 
 via \eqref{eq:newey}) to obtain welfare estimates for potential precision gains.
 
\subsubsection{Immunization Nudges} The first example illustrates the trade-off between welfare and precision in selecting the best arm. \cite{banerjee2021selecting} conducted   a large-scale RCT of nudges to encourage immunization in the state of
Haryana, Northern India. The study employed a cross-randomized design involving three main types of nudges: providing incentives, sending short messaging service (SMS) reminders, and seeding community ambassadors. Each type included multiple variations, resulting in  74 distinct treatments and a control. While \cite{banerjee2021selecting} proposed a method to group similar treatments for analysis, we retain the original set of treatments for illustrative purposes.

The outcomes include   the number of measles shots administered, and number of shots per dollar spent. We observe 7,370 households and use doubly robust scores to obtain the welfare estimates $\widehat{V}(\pi)$ and their standard errors. Following  \cite{banerjee2021selecting},  we   weight these village-level regressions by village population, and standard errors are clustered at the SC level. We calculate $\widehat{k} = \widehat{q}_{1-\alpha,\Pi}$ by the bootstrap procedure in Condition~\eqref{eq:bootstrap approx}.

\begin{figure}[htbp]
    \centering
    \begin{subfigure}{0.5\linewidth}
        \centering
        \includegraphics[width=\linewidth]{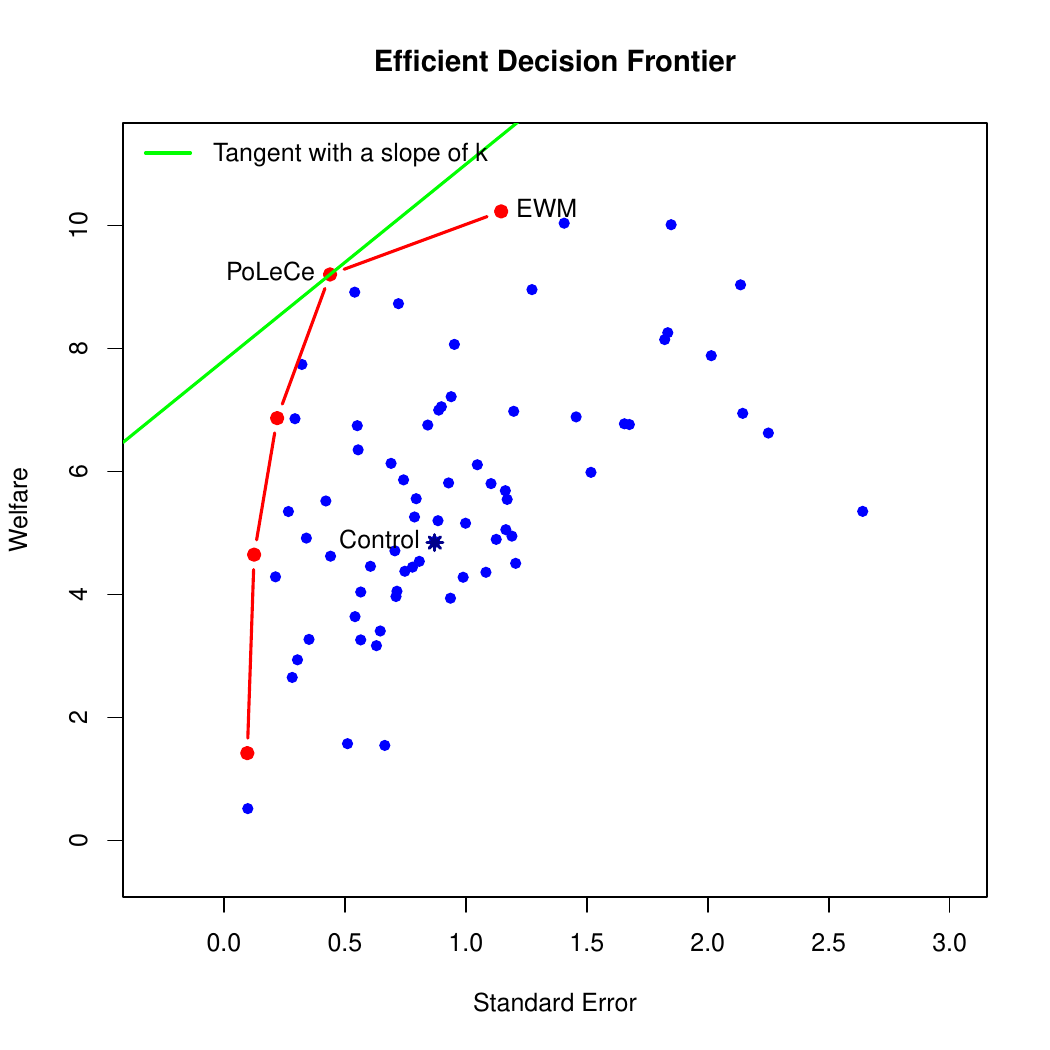}
        \caption{Measles shots outcome}
        \label{fig:shot_Measles1}
    \end{subfigure}%
    \hfill
    \begin{subfigure}{0.5\linewidth}
        \centering
        \includegraphics[width=\linewidth]{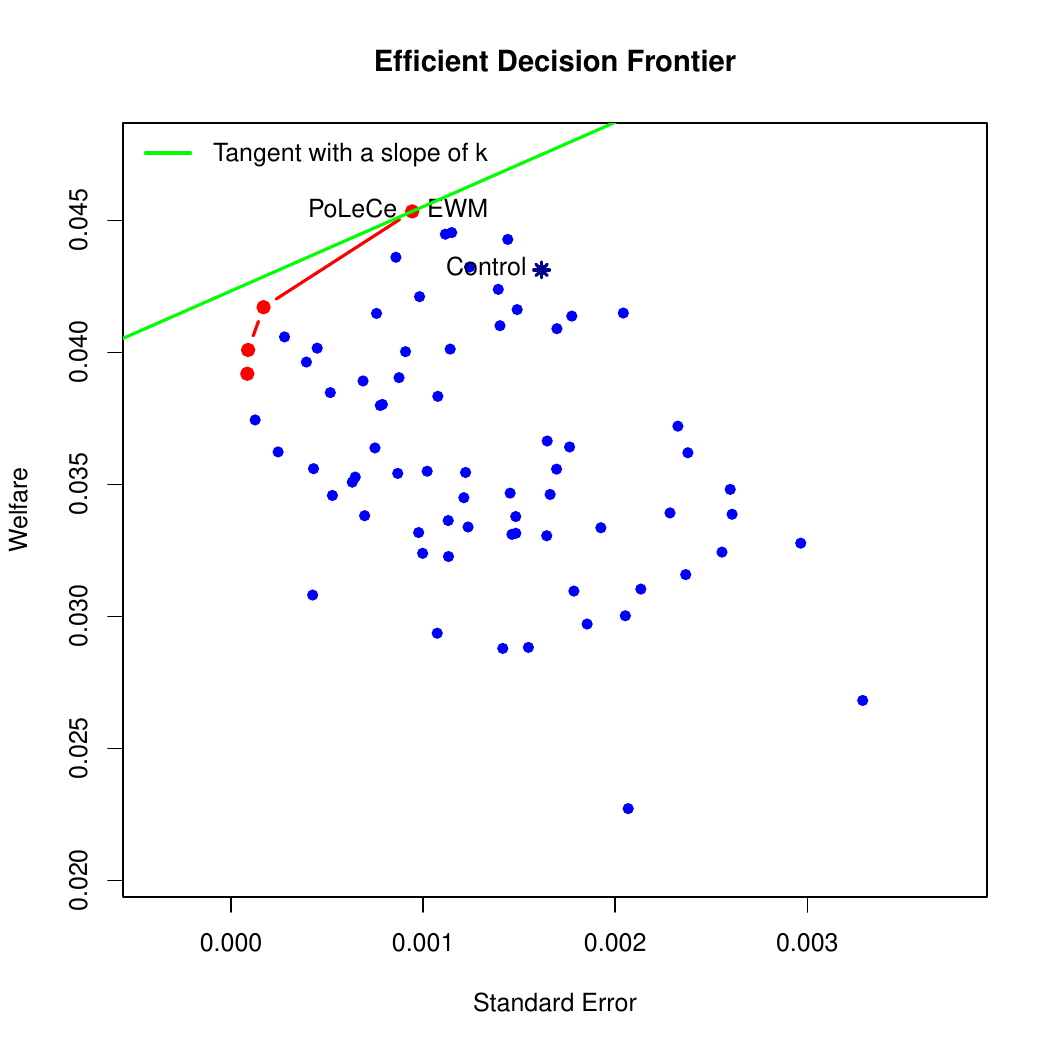}
        \caption{Shots per dollar outcome}
        \label{fig:shots_per_dollar}
    \end{subfigure}
    \caption{\label{fig:frontier} Welfare estimates against precision}
\begin{minipage}{1\textwidth} %

{\small \emph{Notes}: Each blue circle represents a pair of welfare estimate and its standard error $(\widehat{V}(\pi),\widehat{s}(\pi))$ for a given treatment $\pi$. The connected red circles show the decision frontier.
\par}
\end{minipage}
\end{figure}

As shown in Figure~\ref{fig:frontier}
, there is substantial variation in the precision of welfare estimates across treatments. While EWM selects the treatment with the highest welfare estimate $\widehat{V}(\pi)$, PoLeCe adjusts for estimation uncertainty and may choose differently. Here we set $\alpha=0.05$ which places large emphasis on precision.  For example, for shots per dollar, EWM and PoLeCe both select \texttt{trusted info hub, no incentive, no reminder}, as there is minimal trade-off between welfare and precision. However, for number of shots, EWM selects \texttt{trusted info hub, high slope, high SMS}, while PoLeCe selects \texttt{info hub, low slope, low SMS} because the latter has comparable welfare but much higher precision.

\begin{table}[htbp]
{\small
\begin{center}
\caption{Results for Immunization Nudges}
\begin{tabular}{cccc}
\hline\hline
 & EWM & PoLeCe & Control\\
 \hline
 \multicolumn{4}{l}{Panel A. Outcome: Measles shots} \\ 
Estimated Value: $\widehat{V} (\widehat{\pi})$ & 10.23 & 9.21  & 4.85  \\  
LCB: $\widehat{V}(\widehat{\pi})-\widehat q_{0.95,\Pi}\widehat{s}(\widehat{\pi})$ &  6.57 & 7.81 & 2.06\\
 \hline
 \multicolumn{4}{l}{Panel B. Outcome: Shots per dollar} \\
Estimated Value: $\widehat{V} (\widehat{\pi})$ & 0.045  &  0.045  &  0.043 \\  
LCB: $\widehat{V}(\widehat{\pi})-\widehat q_{0.95,\Pi}\widehat{s}(\widehat{\pi})$ &  0.042 & 0.042 & 0.038\\
 
\hline
 \multicolumn{4}{l}{Panel C. Selected Policy: } \\ 
 \texttt{Measles shots}  & \emph{trusted info hub,} & \emph{info hub,}&  \emph{no info,}\\
   & \emph{high slope} &  \emph{low slope} &\emph{no incentive}  \\
      & \emph{high SMS} &  \emph{low SMS} &\emph{no reminder}  \\
 \texttt{Shots per dollar}  & \emph{trusted info hub,} & \emph{trusted info hub,}  &  \emph{no info,}\\
   & \emph{no incentive} & \emph{no incentive}  & \emph{no incentive} \\ 
      & \emph{no reminder} & \emph{no reminder}  & \emph{no reminder} \\ 
\hline
 
\end{tabular}
\end{center}
\label{tab:example nudge}
 \begin{minipage}{1\textwidth} %
{Note. 
EWM refers to empirical welfare maximization, 
and 
PoLeCe corresponds to policy learning with confidence.
In Panels A and B, the EWM  and PoLeCe lower confidence bounds with $\alpha = 0.05$
are provided. 
In Panel C, selected treatment combinations are given. 
\par}
\end{minipage}
}
\end{table}

\subsubsection{Informal Savings Technologies}

 \cite{DupasRobinson2013} conducted a field experiment in Kenya to study how  simple informal savings technologies can  increase investment in preventative health and reduce vulnerability to health shocks.   The experiment was run through Rotating Savings and Credit Associations (ROSCAs),
and all participants were required to be enrolled in a ROSCA at the start.\footnote{ROSCAs are informal savings groups.  Members come together on a regular basis and contribute to a common pot of money which is
taken home by one member on a rotating basis.}

Specifically, 771 ROSCA  participants were randomized to five treatment groups:
(i) \emph{Control};
(ii) \emph{Safe Box}: participants were given a simple locked box made out of metal, while they were asked to record what health product they were saving for, and its cost, on a passbook;
(iii) \emph{Lockbox}: participants  were given a passbook and a locked box identical to those in the \emph{Safe Box} treatment, except that the program officer kept the key;
(iv) \emph{Health Pot}: participants were given a side pot that the members could contribute to in addition to the regular ROSCA pot;\footnote{Unlike the regular pot, this pot would be earmarked for a specific health product.}
(v) \emph{Health Savings Account, or HSA}: participants were encouraged to make regular deposits into an individual HSA managed by the ROSCA treasurer.

The outcomes include investments in health and measures of whether people have trouble affording medical treatments.  As an illustration, we focus on health investments in terms of amount (in Ksh) spent on preventative health products. We choose two baseline covariates: \texttt{Female} and \texttt{Married} purely for simplicity of illustration, without taking a stand on whether they \emph{should} be used in this setting.\footnote{In practice, equity, fairness or other concerns could make the use of these covariates for heterogeneous policy assignment unwarranted, and the use of other covariates, either in addition or separately, may be preferred. One could then compute the PoLeCe rule that incorporates the desired covariates.} 
We obtain 691 individuals after removing  80 participants who received multiple treatments.  The randomization was done after stratifying on some ROSCA characteristics.  
We incorporate strata indicators as well as \texttt{Female} and \texttt{Married} for calculating the doubly robust scores.%

\begin{table}[htbp]
{\small
\caption{Results for Informal Savings Technologies}
\begin{center}
\begin{tabular}{cccc}
\hline\hline
  & EWM & PoLeCe & Control \\
 \hline
 \multicolumn{4}{l}{Panel A. Outcome: Health Investment (in Ksh)} \\
Estimated Value: $\widehat{V} (\widehat{\pi})$ & 890  & 587 & 299  \\  
LCB: $\widehat{V}(\widehat{\pi})-\widehat q_{0.95,\Pi}\widehat{s}(\widehat{\pi})$ & 100 & 366 & 137 \\
\hline\multicolumn{3}{l}{Panel B. Selected Policy: } \\ 
(\texttt{Female}, \texttt{Married}) $=(0,0)$ & \emph{Health Pot} & \emph{Health Pot} & \emph{Control} \\
(\texttt{Female}, \texttt{Married}) $=(0,1)$ & \emph{Health Pot} & \emph{Health Pot} & \emph{Control} \\
(\texttt{Female}, \texttt{Married}) $=(1,0)$ & \emph{Health Pot} & \emph{Safe Box} & \emph{Control} \\
(\texttt{Female}, \texttt{Married}) $=(1,1)$ & \emph{Safe Box}     &  \emph{Health Pot} & \emph{Control} \\
\hline\hline
\\
\end{tabular}
\end{center}
\label{tab:example poorsavemore}
\begin{minipage}{1\textwidth} %
{Note. 
EWM refers to empirical welfare maximization, 
and 
PoLeCe corresponds to policy learning with confidence.
In Panel A, EWM  and PoLeCe lower confidence bounds with $\alpha = 0.05$
are provided along with that for the control group policy. 
In Panel B, selected treatment for each vector of (\texttt{Female}, \texttt{Married})
is given. 
\par}
\end{minipage}
}
\end{table}

There are $|\Pi|=625 = 5^4$ possible treatment policies all together because there are five treatment values
and four possible values in the support of the two covariates. 
Figure~\ref{fig:DP2013} plots all welfare estimates relative to their precision. The highest welfare is observed below 900 Ksh, but it is accompanied by the largest standard error. Therefore, it is advisable to account for the precision of these estimates when selecting an optimal policy. To implement PoLeCe at the level $\alpha=0.05$ and calculate $\widehat{q}_{0.95,\Pi}$, we apply the bootstrap procedure from Condition~\eqref{eq:bootstrap approx}.
Table~\ref{tab:example poorsavemore} summarizes the empirical results.
In Panel A, the maximized welfare by EWM is 890 Ksh, which corresponds to the highest welfare in Figure~\ref{fig:DP2013}. 
The adjusted estimate of maximum welfare $\widehat{V}(\piPoLeCe)$ at $\alpha = 0.05$ is 587, which is much less than no-precision-corrected $\widehat{V} (\widehat{\pi}_{\text{EWM}})$ but substantially greater than $\widehat{V}(\widehat{\pi}_{\text{EWM}})-\widehat q_{0.95,\Pi}\widehat{s}(\widehat{\pi}_{\text{EWM}})$.  Furthermore, the chosen optimal policies differ for females. EWM selects \emph{Health Pot} for unmarried women and \emph{Safe Box} for married women. In contrast, after accounting for standard errors, PoLeCe recommends \emph{Safe Box} for unmarried women and \emph{Health Pot} for married women.

\begin{figure}[htbp]
    \centering
    \begin{subfigure}{0.5\linewidth}
        \centering
        \includegraphics[width=\linewidth]{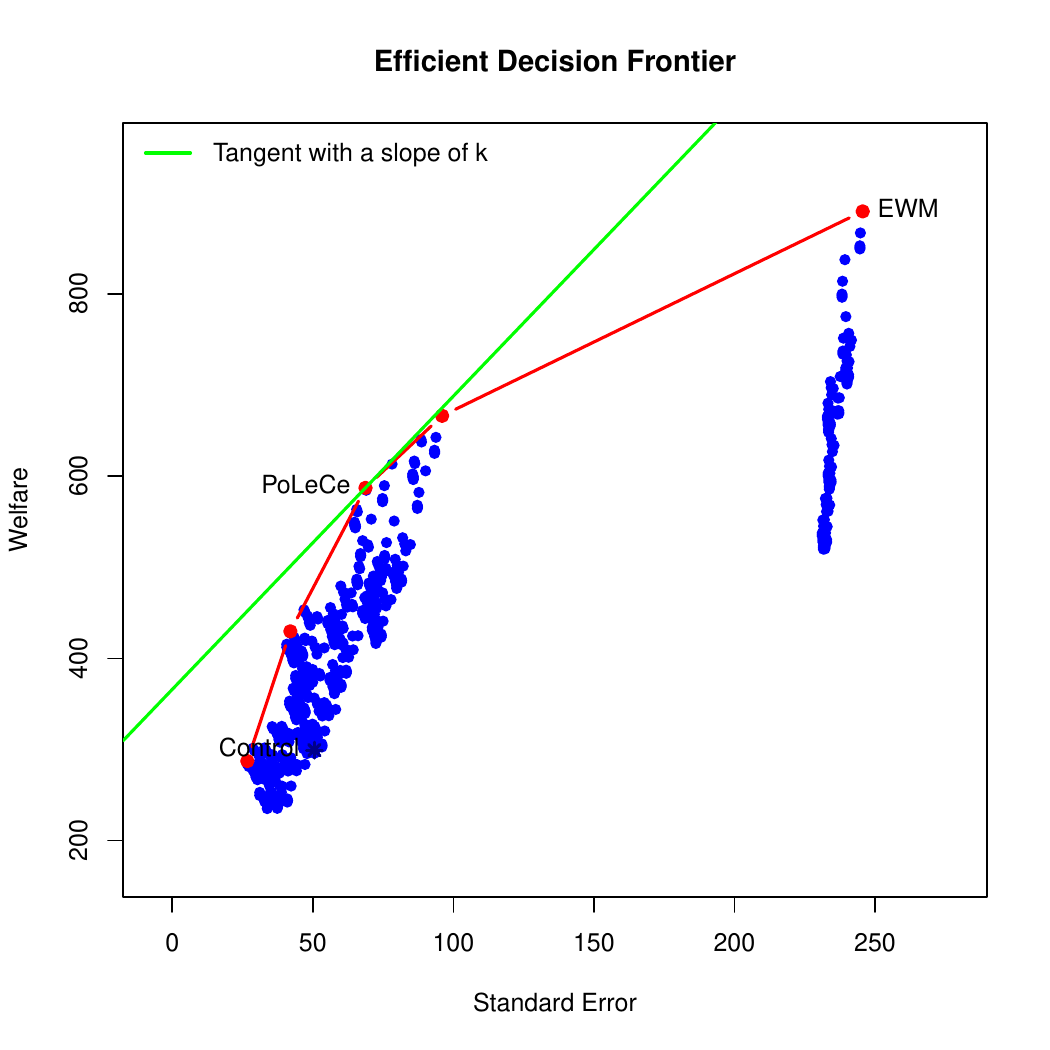}
        \caption{Health investment (in Ksh)}
        \label{fig:DP2013}
    \end{subfigure}%
    \hfill
    \begin{subfigure}{0.5\linewidth}
        \centering
        \includegraphics[width=\linewidth]{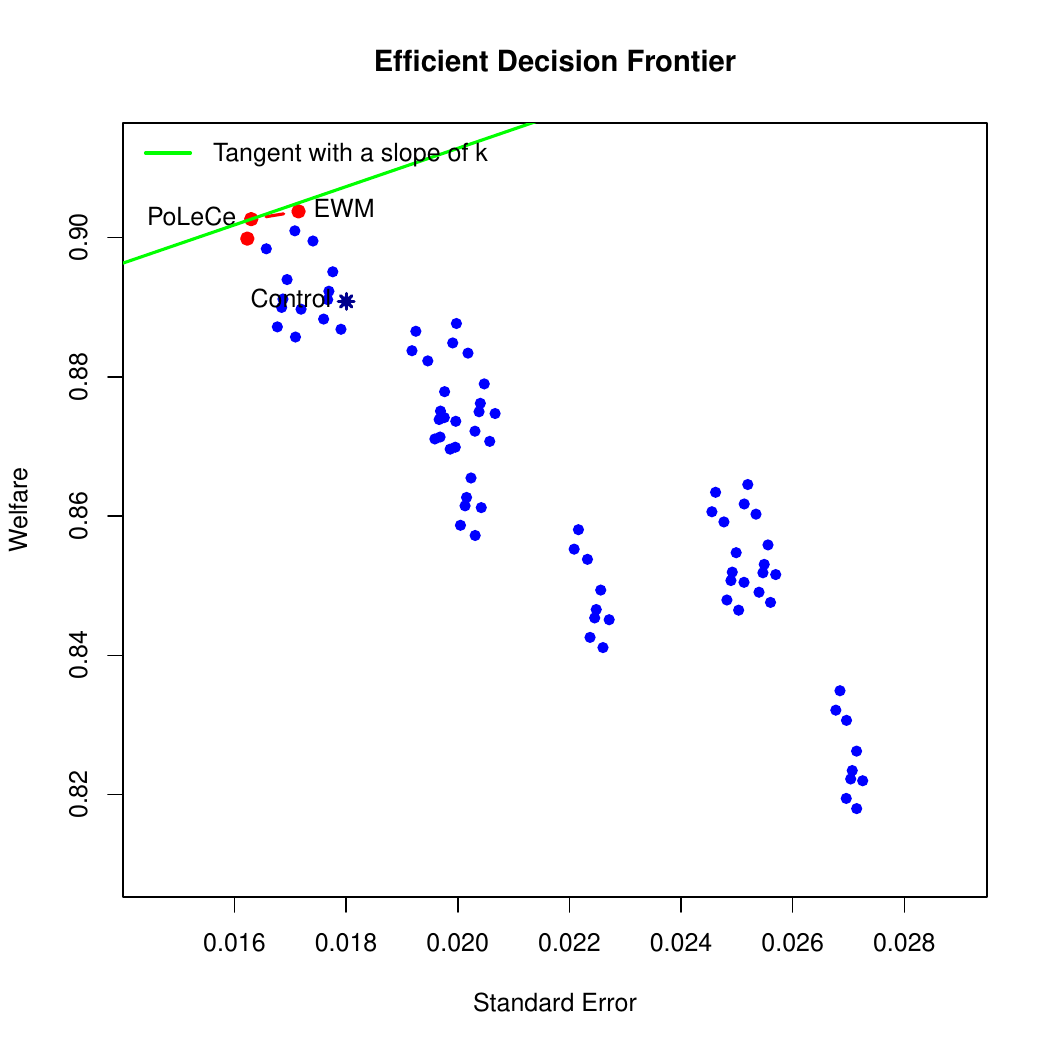}
        \caption{Labor supply}
        \label{fig:Schilbach:2019}
    \end{subfigure}
    \caption{\label{fig:frontier:additional} Welfare estimates against precision}
\begin{minipage}{.8\textwidth} %
{\tiny\emph{Notes}: Each blue circle represents a pair of welfare estimate and its standard error $(\widehat{V}(\pi),\widehat{s}(\pi))$ for a given treatment $\pi$. The connected red circles show the decision frontier. %
\par}
\end{minipage}    
\end{figure}

\subsubsection{Alcohol and Self-Control}

\cite{Schilbach2019} studied alcohol consumption among low-income workers in India. Specifically, 
229 cycle-rickshaw drivers in India were randomized to three treatment groups:
(i) \emph{control group}: participants were paid Rs 90 (\$1.50) each day for visiting the study office;
(ii) \emph{incentive group}: drivers were  given incentives to remain sober---the payment was Rs 60 (\$1.00) if they arrived at the office with a positive blood alcohol content (BAC) and Rs 120 (\$2.00) if they arrived sober;
(iii) \emph{choice group}: individuals were given the same incentives earlier in the study and then offered to choose between incentives and unconditional payments in the later phase of the study. 
See \cite{Schilbach2019} for details on the experiment and background. 

As an illustration, we focus on labor supply measured by the fraction of days individuals worked 
from day 5 (the first day of sobriety incentives) through day 19 (the last day of sobriety incentives).
For simplicity, we choose two baseline covariates:
\texttt{Baseline sober} $= 1$ if the baseline fraction sober is 1 and \texttt{Baseline sober} $=0$ otherwise;
\texttt{Owns rickshaw} $ = 1$ if a driver own a rickshaw and \texttt{Owns rickshaw} $ = 0$ otherwise. 
After removing observations with missing values, there are 222 individuals.
Given complete randomization, we calculate doubly robust scores based on the two baseline covariates.

\begin{table}[htbp]
\caption{Results for Self-control}
{\small
\begin{center}
\begin{tabular}{cccc}
\hline\hline
 & EWM & PoLeCe & Control \\
 \hline
 \multicolumn{4}{l}{Panel A. Outcome: Labor Supply} \\
Estimated Value: $\widehat{V} (\widehat{\pi})$ &  $0.904$  & $0.903$  & $0.891$ \\  
LCB: $\widehat{V}(\widehat{\pi})-\widehat q_{0.95,\Pi}\widehat{s}(\widehat{\pi})$ & $0.857$ & $0.858$ & $0.841$ \\
\hline
 \multicolumn{3}{l}{Panel B. Selected Policy: } \\ 
(\texttt{Baseline sober}, \texttt{Owns rickshaw}) $=(0,0)$ & Control    & Control  & Control \\
(\texttt{Baseline sober}, \texttt{Owns rickshaw}) $=(0,1)$ & Control & Incentive & Control \\
(\texttt{Baseline sober}, \texttt{Owns rickshaw}) $=(1,0)$ & Choice    & Choice & Control \\
(\texttt{Baseline sober}, \texttt{Owns rickshaw}) $=(1,1)$ & Choice     & Choice & Control \\
\hline
\\
\end{tabular}
\end{center}
\label{tab:example}
\begin{minipage}{1\textwidth} %
{Note. 
EWM refers to empirical welfare maximization, 
and 
PoLeCe corresponds to policy learning with confidence.
In Panel A, the EWM point estimate and the PoLeCe with $\alpha = 0.05$  
are provided along the control group policy. 
In Panel B, selected treatment for each vector of (\texttt{Baseline sober}, \texttt{Owns rickshaw})
is given. 
\par}
\end{minipage}
}
\end{table}

There are \(|\Pi|=81 =  3^4 \) possible treatment policies in total since there are three treatment levels and four possible values of the two covariates.
Figure~\ref{fig:Schilbach:2019} plots the welfare estimates against their precision. Although the highest welfare appears with a relatively small standard error, this does not necessarily imply statistical significance.  To implement PoLeCe at the level $\alpha=0.05$ and calculate $\widehat{q}_{0.95,\Pi}$, we apply the bootstrap procedure from Condition~\eqref{eq:bootstrap approx}.
Table~\ref{tab:example} summarizes the empirical results. The EWM and PoLeCe estimates, along with their lower confidence bounds, are very similar and only slightly exceed those of the control group.

\section{Computational Experiments}\label{sec: computational experiments}
Here we provide results of computational experiments calibrated to the three empirical applications in Section~\ref{sec:example treatment policies}.  %
Specifically, we take the welfare estimates and their standard errors $(\widehat V(\pi),\widehat s(\pi))$ as shown in Figure~\ref{fig:frontier} (DGP 1 and 2), Figure~\ref{fig:DP2013} (DGP 3) and Figure~\ref{fig:Schilbach:2019} (DGP 4) as true parameter values $(V(\pi),s(\pi))$. For each DGP, we take independent draws from $N(V(\pi),s^2(\pi))$ for $\pi\in \Pi$ and then assess the performance of EWM and PoLeCe based on $(\widehat{V}^{\ast}(\pi),s(\pi))$ where $\widehat{V}^{\ast}(\pi)$ are the simulation draws.  We focus on $\alpha=0.05$ and since we take $\widehat{V}^{\ast}(\pi)$ to be independent across  $\pi\in \Pi$, we obtain the critical value $\widehat{q}_{1-\alpha,\Pi}$ via the bootstrap as discussed in Condition \eqref{eq:bootstrap approx} by setting the correlation matrix to be the identity matrix.  As a benchmark, we also report the performance of choosing the control policy, whose identity is known in each experiment.

Table~\ref{tab:mc-haryana} summarizes the simulation results.   In line with our theoretical results, when there is heterogeneity in the estimation risk for policies and  $\underline{\sigma}_{\Pi_0}$ (the lower bound on the risk of the best policies $\Pi_0$) is small, as in DGP 1, 2 and 4, PoLeCe outperforms EWM in terms of tail regret. Since PoLeCe is  data-dependent, there is still variability in regret, but it only loses to the control policy in rare cases,  as shown in DGP 4. Finally, by construction, the lower confidence band for PoLeCe is the highest, and is much higher than EWM in all four DGPs considered in this simulation.

 \begin{table}[htbp]
\caption{Calibrated Simulation Results}\label{tab:mc-haryana}
\begin{center}
\begin{tabular}{llrrrr}
  \hline\hline
     & & Average  &Median  & 95\%-percentile & Average\\ 
   & & regret & regret & regret & Welfare LCB\\   
 \hline
 Measles shots   & EWM & \textbf{7.88\%} & \textbf{2.12\%} & 32.10\% & -36.20\% \\ 
(DGP 1) & PoLeCe & 8.83\% & 10.02\% & \textbf{12.84\%} & \textbf{-21.53\%} \\ 
 & Control & 52.64\% & 52.64\% & 52.64\% & -79.92\% \\ 
   \hline 
  Shots per dollar     & EWM & 1.74\% & 1.77\% & 4.89\% & -6.64\% \\ 
 (DGP 2) & PoLeCe & \textbf{1.00\%} & \textbf{0.00\%} & \textbf{ 3.83\% } & \textbf{-5.98\% } \\ 
  & Control & 4.89\% & 4.89\% & 4.89\% & -16.31\% \\ 
    \hline
Informal Saving Technology    & EWM & \textbf{15.91\%} & \textbf{17.21\%} & \textbf{33.85\%} & -54.17\% \\ 
 (DGP 3)   & PoLeCe & 28.14\% & 30.77\% & 41.45\% & \textbf{-45.02\%} \\ 
   & Control & 66.39\% & 66.39\% & 66.39\% & -87.82\% \\ 
   \hline
Alcohol and Self-Control    & EWM & 1.38\% & 1.08\% & 4.46\% & -3.66\% \\ 
   (DGP 4)   & PoLeCe & \textbf{0.84\%} & \textbf{0.47\%} & 2.21\% & \textbf{-3.46\%} \\ 
   & Control & 1.43\% & 1.43\% & \textbf{1.43\%} & -7.86\% \\ 
   \hline   
   \hline   
\end{tabular}
\end{center}
\begin{minipage}{1\textwidth} %
{Note. Based on 10,000 simulation draws with nominal level $\alpha = 0.05$.   Regret is measured in percentage $(V_{\max}-V(\widehat\pi))/V_{\max}$. The Welfare LCB is $\widehat{V}(\widehat\pi)-\widehat q_{1-\alpha, \Pi} \widehat s(\widehat\pi)$, reported as  percentage below $V_{\max}$, that is, $(\widehat{V}(\widehat\pi)-\widehat q_{1-\alpha, \Pi} \widehat s(\widehat\pi)-V_{\max})/V_{\max}$.

\par}
\end{minipage}
\end{table}

\end{document}